\newcommand{\pr}{\mathbb{P}}
\newcommand{\Sc}{\bar{S}}
\title{Ad Auctions for LLMs via\\ Retrieval Augmented Generation}
\author{MohammadTaghi Hajiaghayi
        \\
        University of Maryland
        \\
        hajiagha@umd.edu
	\And 
	S\'ebastien Lahaie
        \\
        Google Research
        \\
        slahaie@google.com
        \And
	Keivan Rezaei
        \\
        University of Maryland
        \\
        krezaei@umd.edu
	\And
	Suho Shin
        \\
        University of Maryland
        \\
        suhoshin@umd.edu
}
\begin{document}
\maketitle
\begin{abstract}
    In the field of computational advertising, the integration of ads into the outputs of large language models (LLMs) presents an opportunity to support these services without compromising content integrity. This paper introduces novel auction mechanisms for ad allocation and pricing within the textual outputs of LLMs, leveraging retrieval-augmented generation (RAG). We propose a \emph{segment auction} where an ad is probabilistically retrieved for each discourse segment (paragraph, section, or entire output) according to its bid and relevance, following the RAG framework, and priced according to competing bids.
    We show that our auction maximizes logarithmic social welfare, a new notion of welfare that balances allocation efficiency and fairness, and we characterize the associated incentive-compatible pricing rule. These results are extended to multi-ad allocation per segment.
    An empirical evaluation validates the feasibility and effectiveness of our approach over several ad auction scenarios, and exhibits inherent tradeoffs in metrics as we allow the LLM more flexibility to allocate ads.
\end{abstract}

\section{Introduction}

Large language models (LLMs)~\citep{brown2020language,anil2023palm,thoppilan2022lamda} have recently gained widespread attention, serving various functions including question answering, content generation, translation, and code completion~\citep{nijkamp2022codegen, fried2022incoder, wang2021gpt, liu2023visual}. 
The emergence of AI-driven assistant models like ChatGPT, Gemini, and Claude has influenced how individuals interact with these technologies, as they increasingly use them to streamline and enhance their work.


While LLMs provide a fresh way to engage with information, the most advanced models are costly to operate~\citep{minaee2024large}. To date, online advertising has been one of the most successful business models of the digital economy. Ads support a wide variety of online content and services, ranging from search engines, online publishers, to video content and more. However, LLM services today predominantly follow a subscription model~\citep{ChatGPTSubscription}. 
A natural question to ask in this context is whether advertising could support LLMs to alleviate serving costs and charges to users, and what format advertising on LLMs might take.

In this paper, we develop auctions that allocate online ads within the output of LLMs using the framework of \emph{retrieval augmented generation} (RAG)~\citep{lewis2020retrieval}.
%
RAG is one of the most popular techniques to integrate factual information into the output of LLMs. When a query is submitted by a user, RAG first retrieves the top-$k$ most relevant documents for the query from a database, and then conditions on these documents to generate the LLM's output, significantly enhancing the reliability of the generated content~\citep{gao2023retrieval}.
The RAG framework naturally lends itself to ad allocation, by retrieving relevant ads from a database rather than documents. The ads can then be incorporated into the output of the LLM via a variety of methods, most directly via prompt engineering.
%

In the auction design literature, a typical approach is to start with a desirable social choice function, and derive an allocation rule that maximizes this function~\citep{roughgarden2010algorithmic}. Here, we take the RAG allocation as given, and investigate which social choice function it corresponds to and what associated pricing rules give it good incentive properties. In particular, we are interested in pricing rules that are \emph{individually rational}, such that there is always an incentive to participate, and ideally \emph{incentive compatible}, such that advertisers are motivated to report their true willingness to pay for their ad to be shown~\citep{gibbard1973manipulation,satterthwaite1975strategy,kelly2014arrow}.

\begin{figure}
    \centering
    \includegraphics[width=\textwidth]{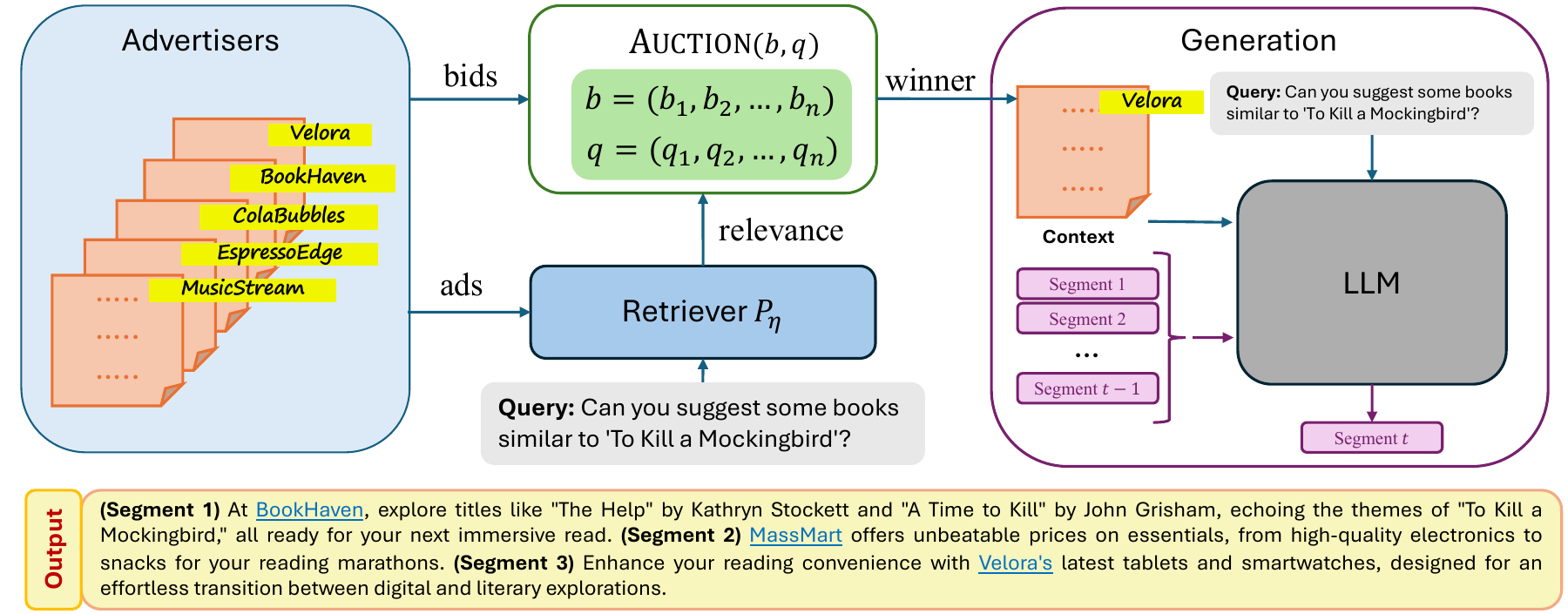}
    \caption{
    Segment auction architecture for LLMs via RAG.}
    \label{fig:teaser}
\end{figure}

\paragraph{Our Contributions}
We introduce the concept of a \emph{segment auction}, in which ads are allocated for each discourse segment, which could be a sentence, paragraph, or the entire LLM output. 
The architecture of a segment auction is depicted in Figure~\ref{fig:teaser}. Given a user query to the LLM, relevant ads together with bids are retrieved from a database. The retriever forwards the bids to the auction, along with click probabilities (aligned with retrieval probabilities). The auction implements a \emph{randomized} allocation rule based on these inputs, following the RAG framework, and the LLM bases its output on the winning ad. The auction can be run repeatedly for each segment, or it can compute several winners for multiple segments at once.

%

We first consider the case where a single ad is allocated per segment, and later study the generalization to multiple ads per segment. 
For single-ad allocation per segment, we show that the RAG-based allocation rule is optimal with respect to a new notion of logarithmic social welfare (LSW), a type of welfare function that balances economic efficiency and fairness. This balance is highly desirable for LLM outputs, which need to be satisfactory to users while potentially generating ad revenue. We show how the randomized RAG allocation rule can be obtained as a randomization over deterministic truthful auctions, which directly leads to a truthful implementation of RAG allocation. We confirm that the expected payment under this scheme matches the pricing rule that obtains from Myerson's lemma~\citep{myerson1981optimal}.
For the general setting of multi-ad allocation per segment, we again provide an auction obtained as a randomization over deterministic truthful auctions. The main device for single- and multi-ad allocation is to perturb the bids with random additive offsets, drawing on ideas from discrete choice methods~\citep{train2009discrete}.

We validate the feasibility and effectiveness of our approach via experiments using publicly-available LLM APIs. We compare single- and multi-allocation segment auctions against each other, and against two naive baselines that do not consider relevance scores, or do not use the LLM to integrate ads (simply appending them to the output instead). Our key finding is that whereas repeated single-ad segment auctions generate higher revenue, a multi-allocation auction leads to higher output quality, as measured by the cosine similarity between embeddings of the output omitting ads, and the output conditioned on ads. We corroborate these results with a qualitative analysis of outputs from single- and multi-allocation segment auctions. We conclude by discussing remaining practical challenges in implementing segment auctions.

\paragraph{Related Work} 
The question of using auctions to influence LLM output has been examined by a few very recent papers. \citet{feizi2023online} present a high-level framework for LLM-based advertising and discuss key requirements such as privacy, latency, and reliability. \citet{duetting2023mechanism} propose a \emph{token auction} to aggregate the outputs of several distinct LLMs, weighted by bids. The motivation is that the LLMs can be provided by different competing advertisers. 
Under this approach every single token is the result of an auction to choose the source LLM, whereas in our work advertisers bid more traditionally to be placed in some segments of the output (e.g., the first paragraph). 

\citet{soumalias2024truthful} provide an auction framework for agents (e.g., advertisers) to steer LLM output according to their preferences, which can be represented by their own LLM or directly via a reward function. Their approach is closely linked to reinforcement learning with human feedback (RLHF), rather than RAG. Candidate outputs are generated by conditioning on context from each advertiser, and one of these outputs is sampled according to aggregate reward across agents. Our approach does not necessarily require generating output candidates for each ad, although this can increase allocative efficiency, as we discuss. Along similar lines,~\citet{sun2024mechanism} investigate how to design mechanisms that incentivize truthful reporting of preferences when fine-tuning LLMs to cater to multiple user groups.

\citet{dubey2024auctions} introduce a factorized framework for LLM ad auctions where an auction module allocates prominence (representing relative importance) to each ad based on their bids and predicted click-through rates. Although prominence could in practice map to ad selection probabilities under RAG, their paper focuses on the concrete application of generating ad summaries within the LLM output. The auction module's prominence allocation serves as a guide to control the output of the LLM module, ensuring that ads with higher prominence receive longer mentions and more user attention. The framework is designed to be incentive compatible and to maximize social welfare by ensuring high-quality summaries and efficient allocation of ad space.

Since the introduction of RAG by~\citet{lewis2020retrieval}, the technique has gained widespread interest in academia and industry. The retrieval component is crucial for our work, with key sub-components including the embedding model for efficient document similarity search~\citep{dai2022promptagator,shi2023replug,wang2023improving} and query optimization to incorporate contextual information~\citep{zhou2022least,dhuliawala2023chain}. For more details on RAG, see the recent survey by~\citet{gao2023retrieval} or tutorial by~\citet{asai2023retrieval}.


\section{Preliminaries}

We now formally define the model behind LLM auctions for ad allocation via RAG.
Consider a set of ads indexed by $[n] = \{1,2,\ldots, n\}$, where each ad $i$ is provided by an advertiser. (We will refer to the ad and advertiser interchangeably for simplicity.)
We often write $\ad_i$ to denote the $i$-th ad for clarity.
Each advertiser $\adv_i$ has a private valuation $v_i$ for its ad to be clicked.
Each advertiser, possibly strategically, submits a bid $b_i$ to be shown in the output in the LLM and to maximize its own payoff, which will be defined shortly.
We write $x$ to denote a user query (a prompt), $y$ to denote the output generated from the LLM, and $y^{(i:j)}$ to denote the sub-sequence of $y$ from $i$-th token to $j$-th token. 
We use boldface to denote a vector, \eg $\b = (b_1,b_2,\ldots, b_n)$ is the vector of bids.

When a user enters a query $x$ into the LLM, \ssedit{an auction mechanism}  \SLcomment{Do we need to introduce this $\Mec$ notation here? It doesn't seem to be used in the main body of the paper, maybe only in the appendices.} selects an ad $a_i$ to advertise, and generates an output $y_i$ that includes a mention of ad $a_i$ along with a hyperlink.
\ssedit{By design}, once the ad $a_i$ to advertise is decided, the generation of $y_i$ is independent of the auction and bids $\b$.\SLcomment{nit: I don't think this is actually an assumption. I think we can write "By design" instead.}


\paragraph{Retrieval augmented generation}To build some intuition on how an LLM auction would operate under the RAG framework, we first recap how the original formulation of RAG proceeds given a set of documents~\citep{lewis2020retrieval}.
Given a query $x$, suppose there exists a set of documents $\{z_1,z_2,\ldots, z_n\}$ that can be used to inform the output of the LLM. Under RAG, the output follows the generative model:
\begin{align}\label{eq:rag-original}
    P(y|x) = \sum_{i \in \mbox{\scriptsize top-$k$}(P_\eta(\cdot | x))} P_\eta(z_i | x)P_\theta(y | x, z_i),
\end{align}
where the summation is over the top-$k$ documents with highest $P_\eta(z_i | x)$, retrieved via a technique like maximum inner-product search~\citep{karpukhin2020dense,johnson2019billion}. The $\eta$ and $\theta$ here refer to the parameters of the retrieval and generator components, either of which can be fine-tuned for overall RAG performance. 
In practice, RAG is often implemented simply by including information from the selected document $z_i$ into the prompt for generating $y$.
The model in~\eqref{eq:rag-original} refers to a variant called RAG-sequence: a single document is selected probabilistically, and is then used to inform the entire output sequence. \citet{lewis2020retrieval} also introduce a variant called RAG-token, where a separate document is selected and considered for each token generation. The flexibility of RAG to consider different contextual inputs at separate stages of output generation is one of the features that make it particularly suitable for integrating ads. We emphasize that document retrieval in~\eqref{eq:rag-original} is \emph{randomized}; RAG does not just deterministically retrieve the highest-scoring document. This probabilistic integration helps the model to leverage information from multiple sources and enhances the robustness and accuracy of the generated responses.

\paragraph{Auction design}To define an auction under the RAG framework, we must specify ad selection probabilities given submitted bids, along with prices for ads that are selected. We assume that RAG provides baseline ad selection probabilities $P_\eta(a_i | x)$ given a query $x$; these are the probabilities that should hold if all ads bid equally, so that there is no reason to prefer any ad based on the bids. We make the following important assumption that the retrieval component is calibrated to the expected clicks an ad $a_i$ would receive under query $x$, \ie its click-through rate ($\ctr_i$).\footnote{The retrieval component can be calibrated using a number of standard methods, such as Platt scaling or Bayesian binning~\citep{platt1999probabilistic,naeini2015obtaining}. We also refer to \cite{mcmahan2013ad} and \cite{graepel2010web} for descriptions of actual click-through rate estimation systems at Google and Microsoft using calibration via isotonic regression.}
%
\begin{assumption}[Calibrated Retriever]\label{ass:seg-single}
    Let $C_i$ be the binary event that ad $i$ is clicked. We assume that $\ctr_i := \mathbf{E}[C_i | x] = \sum_{y_i} \mathbf{E}[C_i | x, y_i] \, P(y_i|x) \propto P_\eta(a_i | x) =: q_i$ where $y_i$ is the output generated from the query $x$ augmented with the ad $a_i$.
\end{assumption}
For ad retrieval purposes, the system could allocate according to $\mathbf{E}[C_i | x, y_i]$ after observing the output $y_i$ generated by conditioning on each ad $i$. Our model accommodates such generalization at the cost of additional query complexity; we elaborate on this in Appendix~\ref{app:ex-post} and~\ref{app:comp}. However, from the perspective of an advertiser, all strategizing happens before the output is generated, taking randomness in ad retrieval and output generation into account.

The quantity $q_i$ can be seen as an indirect measure of the \emph{relevance} of $\ad_i$ to query $x$. Throughout we assume that advertisers are charged \emph{per-click}. However, in mechanism design it is more standard to work with expected prices \emph{per-impression}.\footnote{In advertising terms, an ``impression'' refers to the instance when an advertisement is viewed once by a user, or displayed once on a webpage.} The per-click price $\clickp_i$ and per-impression price $p_i$ for ad $i$ are related by $\impp_i = \ctr_i \cdot \clickp_i$, by Assumption~\eqref{ass:seg-single}.
Given this assumption, we emphasize that our auction would not directly require information on the $\ctr_i$, but only requires the retrieved relevance $q_i$ to run the entire mechanism and generate the output, due to inherent normalization in the auction allocation rule. We will elaborate on this shortly.




An auction defines an allocation rule $\x(\b)$, where $x_i(\b)$ is the selection probability of ad $i$ under the given bids (which implicitly also depends on the baseline selection probabilities).\footnote{With a slight abuse of notation, we use $x$ without any subscript to refer to the user query, whereas $\x$ or $x_i$ denotes the allocation probabilities, following conventions in RAG and auction theory.\label{foot:abuse}} Note that under RAG, the allocation rule is naturally randomized. The auction also defines a payment rule $\p(\b)$, where $p_i(\b)$ is the expected \emph{per-impression} payment of ad $i$. Given ad $i$'s private value-per-click $v_i$, its ex-ante utility (namely, its expected utility before ad selection and clicks are realized) is defined as $\impu_i(\b) = \ctr_i \cdot v_i x_i(\b) - \impp_i(\b) = \ctr_i( v_i x_i(\b) - \clickp_i(\b)) \propto q_i(v_ix_i(\b) - \clickp_i(\b))$. We write per-click utility as $u_i(\b) = v_ix_i(\b) - \clickp_i(\b)$.
%
An auction is \emph{dominant-strategy incentive-compatible} (DSIC) if it is optimal for ad $a_i$ to report its true value to the auction, holding the other bids fixed: $u_i(v_i, \b_{-i}) \geq u_i(b_i, \b_{-i})$ for all possible bids $b_i$ and competing bids $\b_{-i}$.
An auction is \emph{individually rational} if no advertiser is worse off by participating in the auction: $u_i(v_i, \b_{-i}) \ge 0$ for all $\b_{-i}$.




\section{Single allocation segment auction}\label{sec:seg-single}

Following the RAG framework, we introduce a \emph{segment auction} to retrieve and allocate ads during the LLM's process of output generation. A discourse \emph{segment} is an abstraction of a series of tokens that will be the minimal unit of generation for the LLM auction.
For example, the segment could be a single token, sentence, paragraph, or even an entire document.
The segment size can be enforced at a low-level by truncating tokens; otherwise, prompt engineering can be quite effective at limiting LLM output to a specific number of sentences or paragraphs~\citep{feizi2023online,touvron2023llama}.

We first focus on the scenario in which a single ad is incorporated into each segment; we consider the generalization to multiple ads per segment in Section~\ref{sec:multi-seg}.
Let $T$ be the number of segments.
Formally, we are interested in generating the $t$-th segment $y^{(t)}$, given the series of previous segments $y^{(1:t-1)}$.
In generating each segment $y^{(t)}$, we have an opportunity to incorporate one of $k$ ads into the output.
The RAG generative model is as follows.
%
%
%
\begin{align}\label{eq:rag-general}
    P(y^{(1:T)} | x) = \prod_{t \in [T]} \sum_{i \in [n]} P_\eta (\ad_i | x, y^{(1:t-1)}; \b) \, P_\theta(y^{(t)} | x, y^{(1:t-1)}, \ad_i).
\end{align}
The probability $P_\eta(\ad_i | x, y^{(1:t-1)};\b)$ is an adjustment of $P_\eta(\ad_i | x, y^{(1:t-1)})$ according the advertisers' bids.
%
%
We will focus on the following adjusted probability based on linear aggregation~\citep{duetting2023mechanism}:
\begin{align}\label{eq:seg-single-alloc}
    \hq_i^{(t)} = \frac{b_i \cdot q_i^{(t)}}{\left(\sum_{j \in [n]}b_j \cdot q_i^{(t)}\right)}.
\end{align}
Note that if all the bids are the same, this reduces to the baseline RAG output model.

We further impose the following assumption asserting that each segment is rich enough to capture click-through rate by itself, and importantly, the advertiser's utility is additive over each segment.
\begin{assumption}[Rich Segment]\label{ass:rich-segment} 
    Let $C_i^{(t)}$ be the binary event that ad $\ad_i$ is clicked in the $t$-th segment. We assume that Assumption~\ref{ass:seg-single} holds segment-wise, so that $\ctr_i^{(t)} := \mathbf{E}[C_i^{(t)} | x, y^{(1:t-1)}] \propto P_\eta(a_i | x,y^{(1:t-1)}) =: q_i^{(t)}$ for each $t \in [T]$. We also assume that utility decomposes additively across segments: $u_i = \sum_{t \in [T]} u_i^{(t)}(\b)$. 
    
\end{assumption}
Due to this assumption, one can observe that~\eqref{eq:seg-single-alloc} is equivalent to $b_i\cdot \ctr_i^{(t)}/(\sum_{j \in [n]}b_j \cdot \ctr_i^{(t)})$.
Thus, it suffices to only deal with the calibrated relevance instead of the actual click-through rates here.

\begin{figure}
\centering
\definecolor{mycolor}{rgb}{0.96,0.96,0.96}
        \begin{tcolorbox}[colback=mycolor,colframe=black]
        \small{
        {\bf Single allocation segment auction}
            \begin{enumerate}
                \item Collect $\q$ and $\b$.
                \item Draw $\eps_i \sim \text{Gumbel}(0,1)$ for each $i \in [n]$ independently.
                \item Compute the score $s_i = q_ib_ie^{\eps_i}$.
                \item Select the winner $w = \argmax_{i \in [n]} s_i$.
                \item Find the second highest $\ell = \argmax_{i \in [n] \setminus \{w\}} s_i$.
                \item Find the smallest bid $z$ for $w$ such that $s_{w} \ge s_{\ell}$, which is $z = q_{\ell}b_{\ell}e^{\eps_{\ell}} / q_{w}e^{\eps_{w}}$.
                \item Charge $z$ to ad $a_w$ \emph{per click}.
            \end{enumerate}
            }
        \end{tcolorbox}
\caption{Single allocation segment auction}\label{fig:segment-auction}
\end{figure}

%
Our segment auction is presented formally in Figure~\ref{fig:segment-auction}. The key idea is to first perturb each agent's score (i.e., bid-per-impression $q_i b_i$) using independent Gumbel random variables, and then run a standard second-price auction. The bid perturbation ensures that bidders win following adjusted probabilities~\eqref{eq:seg-single-alloc}, \ssedit{\eg see~Lemma~\ref{lm:gumbel}}.\SLcomment{Do we provide a proof of this anywhere, for completeness?} \sscomment{Not actually. Would it be enough to refer to Lemma~\ref{lm:gumbel}? Or would be better to have a proof?}\SLcomment{Yes just referring to that lemma is fine.}
This kind of perturbation is known as the Gumbel max-trick, and is also a familiar idea in discrete choice methods in econometrics~\citep{jang2016categorical,train2009discrete}.
%
After the segment auction determines the allocation and payment, the LLM outputs $y^{(t)}$ according to the generative model~\eqref{eq:rag-general}.\footnote{We focus on the segment auction with replacement, where the same ad can be selected multiple times across different segments. We also implement the segment auction without replacement in Section~\ref{sec:exp}.}

\subsection{Theoretical analysis}
We now provide a theoretical analysis of the segment auction.
In the mechanism design literature, the auctioneer is typically interested in (1) incentive-compatibility, so that truth telling is a dominant strategy, (2) individual rationality, so that no participant is ever worse off by by participating in the auction.
We mainly consider the following \emph{independent segment auction} such that the relevance is independent from the previous segments, \ie for every $t \in [T]$:
\begin{align}\label{eq:independence}
    q_i^{(t)} = q_i \propto P_\eta(a_i | x).
\end{align}
A slightly more general model would be $q_i^{(t)} = \delta^{(t)} q_i$, where
$\delta^{(t)}$ is a segment-wise factor that can capture a user's decreasing propensity to click as the ad is shown in later segments in the output.\footnote{This is analogous to position effects in search advertising auctions~\citep{edelman2007internet,varian2007position}.}
Our results extend in a straightforward fashion to monotonically decreasing segment factors, so we omit them for simplicity.

Our first result is that for the class of independent segment auctions, segment auction maximizes a new notion of logarithmic social welfare assuming truthful bidding, and further satisfies desired properties.
All the proofs can be found in Appendix~\ref{app:proof}.

\medskip
\begin{theorem}\label{thm:seg-single-linear}
\SLcomment{The $x$ notation here is overloaded for both the query and the allocation probability. We mention this in Footnote 3 but it's easy to miss, maybe it's worth re-iterating around here.}
    Given a query $x$, the segment auction is DSIC, IR, Pareto-efficient, and has the maximal logarithmic social welfare (henceforth LSW) among independent segment auctions, where LSW is defined by \footnote{This differs slightly from the well-known Nash social welfare but can be viewed as a version of weighted Nash social welfare with a certain structure. Since weighted Nash social welfare satisfies several fairness notions such as weighted proportional fairness and competitive equilibrium from equal income (CEEI), LSW also guarantees versions of these fairness criteria. Detailed discussion is provided in Appendix~\ref{app:fair}.}
    \begin{align*}
        \LSW = \prod_{t \in [T]} \LSW^{(t)} = \prod_{t \in [T]} \prod_{i \in [n]} (x_i^{(t)})^{v_iq_i}.
    \end{align*}
\end{theorem}

\ssedit{Recall that we write $x_i$ to denote a component of the allocation vector and $x$ to denote the user query (see footnote~\ref{foot:abuse}).}
We remark that even though LSW is defined over $q_i$, replacing it with $\ctr_i$ induces the same optimization problem due to the calibrated relevance assumptions.

Intuitively, if the mechanism sets $x_i^{(t)} = 0$ for some $i \in [n]$, then $\LSW^{(t)}$ becomes zero, implying that the mechanism should guarantee some positive probability of selection to every ad for every segment.
Note that this is indeed a logarithmic analogue of the social welfare since $\log (\LSW^{(t)}) = \sum_{i \in [n]} v_iq_i \log x_i^{(t)}$.
Investigating further properties of the proposed notion of logarithmic social welfare remains as an interesting open question.

\medskip
\begin{theorem}\label{thm:seg-single-ex-ante}
    The segment auction is a randomization over truthful auctions. 
    For the $t$-th segment, its per-click payment rule takes the form
    \begin{align}\label{eq:seg-single-myerson}
        \frac{w_{-i}}{q_i} \parans{\ln \left( \frac{q_ib_i + w_{-i}}{w_{-i}} \right) - \frac{q_ib_i}{w_{-i} + q_ib_i}},
    \end{align}
    where $w_{-i} = \sum_{j \neq i} q_jb_j$.
    Any truthful auction for RAG allocation rule~\eqref{eq:seg-single-alloc} has per-click payment rule~\eqref{eq:seg-single-myerson}, up to an additive constant.
    
\end{theorem}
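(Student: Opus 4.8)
The plan is to establish Theorem~\ref{thm:seg-single-ex-ante} in two parts: first that the segment auction is a randomization over truthful deterministic auctions (yielding truthfulness and the payment \emph{identity} up to an additive constant), and second that the expected per-click payment equals the closed form in~\eqref{eq:seg-single-myerson}. Since segments are independent and utility decomposes additively (Assumption~\ref{ass:rich-segment}), it suffices to argue for a single segment $t$, so I would drop the superscript throughout. The key observation is already flagged in Figure~\ref{fig:segment-auction}: conditioned on the Gumbel draws $\boldsymbol\eps$, the mechanism is a deterministic second-price auction on the perturbed scores $s_i = q_i b_i e^{\eps_i}$, with ad $i$'s ``effective bid'' being $q_i b_i e^{\eps_i}$ and the critical (per-click) price being $z = \max_{j\neq i} q_j b_j e^{\eps_j}/(q_i e^{\eps_i})$. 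A single-parameter second-price auction of this form is monotone in $b_i$ and charges the threshold bid, hence DSIC and IR by Myerson's lemma; a convex combination (over the distribution of $\boldsymbol\eps$) of DSIC, IR auctions that all implement the same interim allocation rule is itself DSIC and IR, and by the Myerson payment uniqueness its per-click payment rule is pinned down up to an additive constant (which IR forces to be the one with zero payment at $b_i=0$). This disposes of the ``randomization over truthful auctions'' claim and the uniqueness-up-to-constant claim.

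For the explicit formula, I would take the Myerson payment-identity route rather than integrating over the Gumbel noise directly. The interim per-click allocation is $x_i(b_i,\b_{-i}) = \Pr_{\boldsymbol\eps}[\, s_i \ge s_j \ \forall j\,]$, which by the Gumbel-max-trick is exactly the linear-aggregation rule $x_i = q_i b_i/(q_i b_i + w_{-i})$ with $w_{-i}=\sum_{j\neq i} q_j b_j$ (this is the standard softmax-over-log-scores identity; I would state it as a lemma, citing \citet{train2009discrete}). Myerson's lemma then gives the per-click payment as
\[
  p^{\mathrm{click}}_i(b_i) \;=\; b_i\, x_i(b_i) \;-\; \int_0^{b_i} x_i(\beta)\, d\beta .
\]
Plugging in $x_i(\beta) = q_i\beta/(q_i\beta + w_{-i})$, the integral $\int_0^{b_i} \frac{q_i\beta}{q_i\beta + w_{-i}}\,d\beta$ evaluates (substitute $u = q_i\beta + w_{-i}$) to $b_i - \frac{w_{-i}}{q_i}\ln\!\big(\frac{q_i b_i + w_{-i}}{w_{-i}}\big)$, and after combining with $b_i x_i(b_i) = \frac{q_i b_i^2}{q_i b_i + w_{-i}}$ and simplifying, one recovers precisely~\eqref{eq:seg-single-myerson}. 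This is a routine one-variable computation; the only care needed is bookkeeping the $q_i$ normalization so that the result is genuinely a \emph{per-click} (not per-impression) price, consistent with the $\impp_i = \ctr_i\cdot\clickp_i$ relation and the proportionality $\ctr_i \propto q_i$ from Assumptions~\ref{ass:seg-single} and~\ref{ass:rich-segment}.

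The main obstacle I anticipate is not any single calculation but the need to argue cleanly that the \emph{interim} allocation rule of the randomized mechanism coincides with the linear-aggregation rule~\eqref{eq:seg-single-alloc} — i.e., a rigorous statement and proof of the Gumbel-max identity $\Pr[\,q_i b_i e^{\eps_i} = \max_j q_j b_j e^{\eps_j}\,] = q_i b_i/\sum_j q_j b_j$ for i.i.d.\ standard Gumbel $\eps_j$, including that ties occur with probability zero so the $\argmax$ is well-defined. Once that identity is in hand, monotonicity of $x_i$ in $b_i$ is immediate, Myerson's lemma applies verbatim, and the rest is algebra. A secondary subtlety is checking that the per-segment payments aggregate correctly across the $T$ segments under the independence assumption~\eqref{eq:independence}, but since $q_i^{(t)}=q_i$ and utilities are additive, the total per-click payment is just $T$ times~\eqref{eq:seg-single-myerson} with the common $q_i$, which I would note in passing.
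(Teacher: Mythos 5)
Your proposal is correct and follows essentially the same route as the paper: truthfulness comes from viewing the mechanism, conditional on the Gumbel draws, as a deterministic second-price auction, and the explicit payment comes from applying Myerson's payment identity to the interim allocation $x_i(b_i)=q_ib_i/(q_ib_i+w_{-i})$ obtained via the Gumbel-max trick. The only cosmetic difference is that you write Myerson's formula as $b_i x_i(b_i)-\int_0^{b_i}x_i(\beta)\,d\beta$ while the paper integrates $\int_0^{b_i} z\,x_i'(z)\,dz$ directly; these agree by integration by parts and yield the same expression~\eqref{eq:seg-single-myerson}.
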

The segment auction is truthful, as it is a randomization over truthful second-price auctions~\citep{mehta2004randomized}. The fact that payment~\eqref{eq:seg-single-myerson} is unique up to an additive constant follows from Myerson's lemma~\citep{myerson1981optimal}.

\subsection{Multi-allocation segment auction}\label{sec:multi-seg}
So far, we have focused on the setting in which the auction mechanism only advertises a single ad per segment.
This approach, however, might be wasteful if the segment is long enough to adapt multiple ads, or if there are several ads that can be advertised naturally without compromising the segment's quality.
In this section, we propose a multi-allocation segment auction that allocates multiple ads in a single segment.
%
The main question here is how one can design the allocation and payment function to obtain a mechanism that exhibits several desired properties.
To this end, we formally consider the auction procedure for the $t$-th segment.
Assuming that we are interested in selecting $k$ ads for each segment, it proceeds as depicted in Figure~\ref{fig:seg-multi}.
\begin{figure}
    \centering
    \definecolor{mycolor}{rgb}{0.96,0.96,0.96}
        \begin{tcolorbox}[colback=mycolor,colframe=black]
        \small{
        {\bf Multi-allocation segment auction}
            \begin{enumerate}
                \item Collect $\q$ and $\b$.
                \item Draw $\eps_i \sim \text{Gumbel}(0,1)$ for each $i \in [n]$ independently.
                \item Compute the score $s_i = q_i b_i e^{\eps_i}$.
                \item Sort the bidders so that $s_{\sigma(1)} \ge s_{\sigma(2)} \ge \ldots \ge s_{\sigma(n)}$ for some permutation $\sigma$ over $[n]$. 
                \item Select the winners $A^* = \{\sigma(1),\ldots, \sigma(k)\}$.
                \item For each winner $\sigma(i)$ for $i \in [k]$, find the smallest bid $z_i$ such that $s_{\sigma(i)} \ge s_{\sigma(k+1)}$, which is $z = q_{\sigma(k+1)}b_{\sigma(k+1)}e^{\eps_{\sigma(k+1)}}/q_{\sigma(i)}e^{\eps_{\sigma(i)}}$.
                \item Charge $z_i$ to each winner $\sigma(i)$ per click.
            \end{enumerate}
            }
        \end{tcolorbox}
    \caption{Multi-allocation segment auction.}\label{fig:seg-multi}
\end{figure}
Given that the mechanism selects the set of winners $A^* \in \cA_k$ where $\cA_k = \{A \subseteq [n]: |A| = k\}$, we delegate the role of generating the output $y_{A^*}$ entirely to the LLM.
For instance, we provide a single document that concatenates the descriptions of the ads in $A^*$ and query the LLM to generate the output conditioned on such a document.

\ssedit{
The following theorem characterizes the allocation function for the multi-allocation segment auction.

\begin{theorem}\label{thm:multi-seg}
    Let $\Sc = [n] \setminus S$. For each $S \in \cA_k$, the probability that the set of ads $S$ is selected as the winners is
    \begin{align*}
        \pr(\mbox{$S$ wins}) = \sum_{T \subseteq S} (-1)^{|T|+1}
        \frac{\sum_{j \in T} q_jb_j }{\sum_{i \in \Sc \cup T} q_ib_i}.
    \end{align*}
\end{theorem}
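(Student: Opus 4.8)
The plan is to exploit the Gumbel--max structure of the mechanism: I would replace the perturbed scores by a family of independent exponential random variables, reduce the event ``$S$ wins'' to a comparison of extrema of these variables, and then evaluate the resulting probability by a single conditioning step followed by inclusion--exclusion.

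First I would reformulate the winning event. By the description of the multi allocation segment auction, $S \in \cA_k$ is selected precisely when every ad in $S$ has a strictly higher score than every ad in $\Sc$, i.e.\ $\min_{i \in S} s_i > \max_{j \in \Sc} s_j$ with $s_i = q_i b_i e^{\eps_i}$; since the $\eps_i$ are i.i.d.\ and atomless, ties have probability zero and this event is a.s.\ well defined. The key observation is that inverting the scores is exactly the right move: setting $E_i := 1/s_i$, a direct computation from the Gumbel c.d.f.\ $\pr(\eps_i \le t) = \exp(-e^{-t})$ gives $\pr(E_i \le u) = \pr(s_i \ge 1/u) = 1 - \exp(-q_i b_i u)$ for $u>0$, so the $E_i$ are independent with $E_i \sim \mathrm{Exp}(q_i b_i)$. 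Inverting reverses all inequalities, hence $\{S \text{ wins}\} = \{\max_{i \in S} E_i < \min_{j \in \Sc} E_j\}$.

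Next, write $w_{-S} := \sum_{j \in \Sc} q_j b_j$ (generalizing the notation $w_{-i}$ of Theorem~\ref{thm:seg-single-ex-ante}). Then $M := \min_{j \in \Sc} E_j \sim \mathrm{Exp}(w_{-S})$ is independent of $(E_i)_{i \in S}$, and conditioning on $M$ gives
\[
  \pr(S \text{ wins}) \;=\; \mathbf{E}\!\left[ \prod_{i \in S} \pr(E_i < M \mid M) \right] \;=\; \mathbf{E}\!\left[ \prod_{i \in S} \bigl(1 - e^{-q_i b_i M}\bigr) \right].
\]
Expanding the product as $\sum_{T \subseteq S} (-1)^{|T|} e^{-(\sum_{i \in T} q_i b_i) M}$ and using the exponential Laplace transform $\mathbf{E}[e^{-\lambda M}] = w_{-S}/(w_{-S} + \lambda)$ yields
\[
  \pr(S \text{ wins}) \;=\; \sum_{T \subseteq S} (-1)^{|T|}\, \frac{w_{-S}}{\, w_{-S} + \sum_{i \in T} q_i b_i \,}.
\]
Finally I would recast this in the claimed form: since $T$ and $\Sc$ are disjoint, $w_{-S} + \sum_{i \in T} q_i b_i = \sum_{i \in \Sc \cup T} q_i b_i$, so each summand equals $1 - \bigl(\sum_{i \in T} q_i b_i\bigr)/\bigl(\sum_{i \in \Sc \cup T} q_i b_i\bigr)$; because $|S| = k \ge 1$ we have $\sum_{T \subseteq S}(-1)^{|T|} = 0$, so the constant parts cancel and we are left with $\pr(S \text{ wins}) = \sum_{T \subseteq S} (-1)^{|T|+1} \bigl(\sum_{j \in T} q_j b_j\bigr)/\bigl(\sum_{i \in \Sc \cup T} q_i b_i\bigr)$, with the $T = \emptyset$ term vanishing on its own.

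The only genuinely delicate point is the change of variables in the second step: recognizing that $s_i \mapsto 1/s_i$ simultaneously turns the Fr\'echet-type scores into exponentials (so every extremum computation becomes elementary and memoryless) and converts ``top-$k$'' into ``bottom-$k$''. Once that is in place, the rest is a routine exponential-distribution calculation, and the final cosmetic rearrangement relies only on the binomial identity $\sum_{T \subseteq S}(-1)^{|T|} = 0$. One should also be explicit that this argument establishes ``randomization over truthful auctions'' style guarantees only for the allocation; incentive and payment claims for the multi-ad case would be handled separately.
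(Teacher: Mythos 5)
Your proof is correct, and it reaches the formula by a genuinely different (and arguably cleaner) route than the paper. The paper works directly on the Gumbel scale: it writes $\pr(\min_{i\in S}B_i\ge u)=\prod_{i\in S}(1-F(u-b_i))$ with $b_i$ redefined as $\log(q_ib_i)$, expands this product by inclusion--exclusion, \emph{differentiates} to get the density of the $k$-th highest score, multiplies by the cdf of $\max_{i\in\Sc}B_i$, integrates over $u$, and only then substitutes $t=e^{-u}$ to evaluate the resulting integrals. Your version performs the exponential substitution $E_i=1/s_i\sim\mathrm{Exp}(q_ib_i)$ at the outset and, crucially, conditions on the \emph{other} side of the threshold: on $M=\min_{j\in\Sc}E_j$, which is itself a single clean $\mathrm{Exp}(w_{-S})$ variable, rather than on the minimum over $S$, whose law requires the inclusion--exclusion expansion before one can even write its density. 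This lets you replace the explicit integral and change of variables by the one-line Laplace transform $\mathbf{E}[e^{-\lambda M}]=w_{-S}/(w_{-S}+\lambda)$, at the modest cost of a final cosmetic rearrangement using $\sum_{T\subseteq S}(-1)^{|T|}=0$ (which the paper's computation avoids because the differentiated density already carries the factor $\sum_{j\in T}e^{b_j-u}$ that becomes the numerator). The underlying inclusion--exclusion over $T\subseteq S$ is the same in both arguments, and your sanity checks (the $k=1$ specialization, disjointness of $T$ and $\Sc$, the vanishing $T=\emptyset$ term) all go through; your closing caveat that this establishes only the allocation probability, not the incentive or payment claims, correctly matches the scope of the theorem.
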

\SLcomment{If we're short on space we could remove this next part, or move it to the appendix. Readers can check it for themselves.} Indeed, this strictly generalizes the single allocation segment auction since taking $S = \{i\}$, we get
\begin{equation*}
\pr(\mbox{$\{i\}$ wins}) = (-1)^{|\{i\}|+1}
\frac{\sum_{j \in \{i\}} q_jb_j}{\sum_{j \in \Sc \cup \{i\}} q_jb_j} =
\frac{q_ib_i}{\sum_{j \in N} q_jb_j},
\end{equation*}
which is the standard selection probability for the single-ad setting.

}

%



In fact, the described multi-allocation segment auction can be deemed as a special case of more general \emph{combinatorial} segment auction (see Appendix~\ref{app:comb-seg}). 
Briefly speaking, in the combinatorial segment auction, we consider each set $A \in \cA_k$ as a single entity to retrieve in RAG, and we assign a set-wise relevance metric $q_A$ to obtain the allocation probability of each set, which is further decomposed by the individual relevance $q_{A,i}$ of each ad $i \in A$.\footnote{We further characterize the VCG payment that makes the combinatorial segment auction DSIC, IR, and formally prove that it further maximizes a combinatorial version of the LSW. Details can be found in Appendix~\ref{app:comb-seg}.}
One advantage of the combinatorial segment auction is that the individual relevance $q_{A,i}$ given the set $A$ can be more naturally connected with the advertiser utility, making it easier to calibrate with the actual click-through rate.
However, this comes at the cost of larger computational complexity and query complexity with respect to the relevance and the LLM module, and requires to compute the individual relevance as well as the set-wise relevance.

\section{Experiments}\label{sec:exp}
We validate our theoretical findings and provide insights on operating segment auctions in practice via numerical simulations. After determining the winning advertiser, we provide the ad to the LLM context and ask the model to generate an additional segment incorporating the ad, while continuing from the previous segments and advertising the selected ad. More details on the LLM and prompts used are in Appendix~\ref{app:prompt-eng}.



\paragraph{Setup}
We consider segment auctions where each segment is a single sentence, and the entire document consists of three segments. There are five types of auctions:
(1) segment auction with replacement, allowing repeated selection of the same ads across segments,
(2) segment auction without replacement, requiring different ads for each segment,
(3) Naive I, which uses the same allocation and payment functions as the segment auction \ssedit{with replacement} \SLcomment{with or without replacement?} but concatenates selected ads' texts at the end of the output, without using the LLM to integrate them into the output,
(4) Naive II, similar to single allocation segment auctions but disregarding relevance (see Pseudocode in Appendix \ref{app:exp-data}, Figure~\ref{fig:naive-two}),
and (5) multi-allocation auction, which treats the three sentences as one longer segment, allocating three ads to the entire output at once.


\paragraph{Relevance measure} 
To compute the relevance measure, we use a model from the \texttt{sentence-transformers} library.\footnote{https://huggingface.co/sentence-transformers/multi-qa-MiniLM-L6-cos-v1} This model maps input sentences or paragraphs into an embedding space where semantically similar texts have higher cosine similarity, while unrelated texts have lower cosine similarity.


\paragraph{Scenarios} We consider three experimental scenarios, each involving a set of advertisers and their corresponding bids. However, due to space limitations we defer the analysis of two scenarios to the Appendix.
We run each scenario $500$ times (trials) and calculate the average of the metrics, which will be defined shortly.
Throughout the experiments, the query is fixed to: \emph{``Can you suggest some books similar to `To Kill a Mockingbird'?''}.


\paragraph{Auction outcomes}
We calculate the following outcome metrics for evaluation:
\begin{itemize}
    \item Revenue $:= \sum_{t \in [T]}\sum_{i \in [n]}p_i^{(t)}$.
    \item Social welfare $:= \sum_{t \in [T]}\sum_{i \in [n]}v_iq_i^{(t)}x_i^{(t)}$.
    \footnote{We here use $q_i$ instead of the exact \ssedit{click-through rate}. Given the calibrated relevance assumption, all the mechanisms' social welfare would be equivalently scaled, so all the discussions carry over with the exact \ssedit{click-through rate}.
    \SLcomment{I know we define the $\ctr$ notation, but if we don't use the CTR acronym much it may be better to just write out click-through rate.}
    }
    \item Relevance $:= \sum_{t \in [T]} \sum_{i \in [n]} q_i^{(t)}x_i^{(t)}$.
    \item Minimum social welfare $:= \min_{i \in [n]}\sum_{j \in [500]}\tilde{u}_{i,j}$ where $\tilde{u}_{i,j}$ denotes the allocative utility $\sum_{t \in [T]}v_iq_i^{(t)}x_i^{(t)}$ of agent $i$ in trial $j$.
\end{itemize}


\paragraph{Quality of output}
To capture the genuine quality of the generated output, inspired by~\citet{feizi2023online}, 
we measure the \emph{embedding similarity} between the original outputs in which no ads are advertised but rather purely generated from the LLM, and the modified output corresponding to each mechanism.
This similarity is computed using cosine similarity and is normalized to lie in $[0,1]$.


\subsection{Results}

\paragraph{Auction outcomes}
The scenario we consider is shown on the left of Table~\ref{tab:scenario1} and includes four ads with a wide range of final allocation probabilities. 
\ssedit{Note that we do not include the results for Naive I here since its auction metrics are necessarily identical to the segment auction with replacement.}
As seen on the right of Table~\ref{tab:scenario1}, social welfare varies significantly between mechanisms. The segment auction with replacement has the highest social welfare, followed by the one without replacement, due to large differences in allocative efficiency ($q_i v_i$) among advertisers. The segment auction without replacement still outperforms Naive II in terms of social welfare. However, the multi-allocation segment auction tends to have the lowest revenue as it charges \ssedit{lower payments to winners}.\SLcomment{I'm not sure what we mean here by "with respect to the number of allocations". Maybe we should reword.}

\begin{table}[H]
    \begin{minipage}{.329\linewidth}
      \centering
      \resizebox{\textwidth}{!}{
        \begin{tabular}{cccc}
            \toprule
            Advertiser   & Bid  & $q_i$  & $x_i$  \\ \midrule
            Velora       & $3$  & $0.36$ & $0.22$ \\ 
            BookHaven    & $3$  & $0.87$ & $0.54$ \\ 
            MassMart     & $2$  & $0.31$ & $0.13$ \\ 
            EspressoEdge & $2$  & $0.26$ & $0.11$ \\ \bottomrule
        \end{tabular}
        }
    \end{minipage}%
    \begin{minipage}{.68\linewidth}
      \centering
        \resizebox{\textwidth}{!}{
            \begin{tabular}{ccccc}
            \toprule
            Mechanism           & Soc. Wel.         & Revenue                & Relevance              & Min. Soc. Wel. \\ \midrule
            Seg w/ repl.  & $.660$ {\scriptsize ($\pm .0091$)} & $.371$ {\scriptsize ($\pm .0070$)} & $.688$ {\scriptsize ($\pm .0082$)} & $.185$             \\ 
            Seg w/o repl. & $.521$ {\scriptsize ($\pm .0025$)} & $.333$ {\scriptsize ($\pm .0060$)} & $.565$ {\scriptsize ($\pm .0021$)} & $.294$             \\ 
            Naive II            & $.508$ {\scriptsize ($\pm .0085$)} & $.379$ {\scriptsize ($\pm .0065$)} & $.552$ {\scriptsize ($\pm .0076$)} & $.329$             \\
            Multi alloc & $.524$ {\scriptsize ($\pm .0021$)}	& $.238$ {\scriptsize ($\pm .0061$)}	& $.569$ {\scriptsize ($\pm .0016$)} & $.298$\\ 
            \bottomrule
            \end{tabular}
        }
    \end{minipage}
    \vspace{2mm}
    \caption{Experiment setup (left), and the corresponding auction outcomes (right).
    Note that all metrics are normalized by dividing them by their maximum possible value.
    }\label{tab:scenario1}
\end{table}
\vspace{-4mm}
\SLcomment{We don't explain why Naive I isn't part of this table. It's because its auction metrics are necessarily identical in expectation to one of the segment auctions?}
\sscomment{Yes, and added a sentence in the beginning of this para}

Notably, the relevance of the segment auction with replacement far exceeds that without replacement, unlike the uniform scenario. This is due to the ad 'Bookhaven' with very large relevance (0.87). Naive II and the segment auction without replacement have similar overall relevance, likely because the number of ads (4) is small compared to the total number of slots (one for each of 3 segments). For minimum social welfare, the ordering is opposite to social welfare due to differences in allocation probabilities. Segment auction with replacement selects 'Bookhaven' repeatedly, while without replacement, different ads are chosen for different slots. Naive II has slightly larger minimum social welfare than the segment auction without replacement due to a more uniform selection procedure induced by similar bids in Table~\ref{tab:scenario1}.\SLcomment{Is this the intended table reference? Table 3 doesn't appear in the main body of the paper.}\sscomment{Great catch! fixed it} Similar tendencies are observed in the other two scenarios, detailed in Appendix~\ref{app:exp-fur-auction}.


\paragraph{Output quality}
To further verify the effect of incorporating the set of ads in the context of the entire output, we implement the multi-allocation segment auction.
For a fair comparison, in this auction, we define the segment to be the entire three sentences, and allocate $k=3$ ads within it.

\begin{table}[H]
      \centering
        \resizebox{\textwidth}{!}{
        \begin{tabular}{cccc||ccc}
            \toprule
            Mechanism          & $1$st seg & $2$nd seg  & $3$rd seg  & $k=1$ & $k=2$ & $k=3$\\ \midrule
            Seg w/ repl.       & $.746$ ($\pm .0040$)  & $.596$ ($\pm .0040$) & $.588$ ($\pm .0039$) & $.746$ ($\pm .0040$) & $.715$ ($\pm .0039$) & $.700$ ($\pm .0036$)\\ 
            Seg w/o repl.    & $.752$ ($\pm .0040$)  & $.602$ ($\pm .0045$) & $.576$ ($\pm .0043$) & $.752$ ($\pm .0040$) & $.716$ ($\pm .0035$) & $.702$ ($\pm .0034$)\\ 
            Naive I     & $.743$ ($\pm .0043$)  & $.555$ ($\pm .0033$) & $.551$ ($\pm .0035$) & $.743$ ($\pm .0043$) & $.740$ ($\pm .0044$) & $.671$ ($\pm .0032$)\\ 
            Naive II & $.745$ ($\pm .0048$)  & $.600$ ($\pm .0040$) & $.584$ ($\pm .0047$) & $.745$ ($\pm .0048$) & $.712$ ($\pm .0045$) & $.698$ ($\pm .0040$)\\
            Multi-alloc & - & - & - & - & - & $.715$ ($\pm .0030$)\\\bottomrule
        \end{tabular}
        }
    \vspace{2mm}
    \caption{The $2$-$4$th columns represent the similarity of the individual segment to the original output, and the $5$-$7$th columns represent the similarity of the first $k$ segments to the original output.}\label{tab:qual}
\end{table}
\vspace{-4mm}
Interestingly, we indeed observe that the eventual output quality of the multi-allocation segment auction is the highest among every mechanism.
This implies that if one gives flexibility of deciding which part to incorporate ads among the entire document, the LLM would do a much better job articulating the advertised output without compromising the output quality.
In Appendix~\ref{app:exp-fur-auction}, we observe that this phenomenon is amplified with more advertisers, demonstrating the superiority of the multi-allocation segment auction in terms of output quality.

\paragraph{Qualitative analysis of single vs. multi-allocation}
We provide a qualitative comparison of the output quality between the single and multi-allocation segment auctions with $11$ advertisers in Figure~\ref{fig:qual1}.
In the multi-allocation segment auction, interestingly, we observe that an ad is not always advertised in each sentence.
Indeed, both `MassMart' and `EspressoEdge' are advertised in the second sentence in a coherent manner (\teal{teal} text), while the first sentence is devoted to fully addressing the user's query.
In addition, the third sentence (\teal{teal} text) is constructed in a cohesive way from the previous sentence.

This verifies our conjecture that the multi-allocation segment auction constructs more coherent output, as the LLM can optimize over the entire document to incorporate the selected ads. The LLM can decide a coherent ordering of ads within the document, unlike the single allocation segment auction, which forces one ad into each segment. For example, in the first segment of the single allocation auction, advertising `BrainChips' is less cohesive (\red{red} text), while the second segment is more fluent (\teal{teal} text), and the third segment becomes irrelevant (\red{red} text) again. Further qualitative analyses are provided in Appendix~\ref{app:exp}.



\begin{figure}[H]
    \begin{tcolorbox}[colback=gray!10, colframe=gray!50, width=\textwidth, sharp corners]

    {
    \scriptsize
    
    \textbf{Single allocation}:

    \textbf{(Segment 1)} 
    If you appreciate the classic exploration of moral and ethical questions in "To Kill a Mockingbird," you might find "The Help" by Kathryn Stockett equally compelling, providing rich narratives driven by the kind of complex social issues that users of \flink{BrainChips} \red{technology tackle every day in their quest for innovative solutions and understanding}. \textbf{(Segment 2)} Reflect on these intricate tales of humanity and justice with a good book in hand, and perhaps your next reading adventure can kick off with a delightful shopping trip to \flink{MassMart}, where you can explore their extensive range of high-tech electronics and more, \teal{enhancing your reading experience} with unbeatable value and convenience. \textbf{(Segment 3)} After securing your new reads, \red{unwind with a visit to} \flink{EspressoEdge}, where the rich, meticulously crafted beverages offer the perfect accompaniment to dive into your literary journey, reinforcing a truly immersive experience with each sip.
    }
    
    \vspace{2mm}

    {\scriptsize
    \textbf{Multi-allocation}:
    
    If you enjoyed the profound themes of racial justice and moral growth in "To Kill a Mockingbird," then I suggest checking out "The Help" by Kathryn Stockett and "Go Set a Watchman" by Harper Lee, which explores similar veins of social and ethical dilemmas. While you're \teal{picking up these intriguing reads} at \flink{MassMart}, where high-quality products meet unbeatable prices, perhaps consider enhancing your reading experience with a \teal{comforting cup of coffee} from \flink{EspressoEdge}, renowned for its exquisite blends perfect for literary afternoons. \teal{And for those who prefer digital reading}, make sure your devices are powered by \flink{BrainChips} processors, ensuring a smooth, efficient reading experience that keeps you immersed in the world of justice and personal integrity.
    }    
\end{tcolorbox}
\caption{Outputs of single and multi-allocation segment auctions.}\label{fig:qual1}
\end{figure}
\vspace{-4mm}




\section{Conclusions}

This paper considered the question of integrating ads into the output of LLMs, to offset serving costs and user subscription charges. Our approach is based on the popular RAG framework for incorporating factual information into LLM-generated content~\cite{lewis2020retrieval}. We introduced the concept of a \emph{segment auction} where an auction is run to integrate single or multiple ads into each output segment (e.g., sentence, paragraph). We showed that our segment auction designs implement the RAG allocation rule while charging incentive compatible prices. We also showed that the single-ad segment auction maximizes logarithmic social welfare, which balances efficiency and fairness in the allocation. In our experimental evaluation, our key finding was that whereas repeated single-ad segment auctions have higher revenue, less-frequent multi-ad auctions lead to higher quality output, for the same number of ads. This uncovers an inherent trade-off between revenue and quality for operators of segment auctions. 

We see several avenues for follow-up work. 
We note that while our work takes the perspective of charging ads to appear in LLM output, one could also take the reverse perspective where information sources need to be compensated for providing unique, factual information under the RAG framework. In that case a \emph{reverse auction} would be run to obtain high-quality information at minimum cost. We expect the main design ideas presented here to carry through, though important practical details (e.g., per-impression vs.\ per-click pricing) would change. Another question is the integration of \emph{reserve prices}, which can serve to both increase revenue and maintain output quality standards. Finally, based on our findings on the relative quality of single- vs. multi-ad segment auction output, it would be worthwhile to investigate more sophisticated approaches to RAG segment auctions like joint fine-tuning of the retriever and generator components.

\section{Acknowledgements}
This work is partially supported by DARPA QuICC, NSF AF:Small \#2218678, NSF AF:Small \#2114269, Army-Research Laboratory (ARL) \#W911NF2410052, and MURI on Algorithms, Learning and Game Theory.






\bibliographystyle{plainnat}
\bibliography{ref}


\appendix
\newpage

\section{Ex-post relevance}\label{app:ex-post}
Recall that each allocation probability $q_i^{(t)}$ depends on the RAG probability $p_\eta(a_i | x)$, and we assume that the average CTR is proportional to this quantity.\footnote{We here focus on the single-allocation setting for the ease of exposition, but the same argument carries over to the multi-allocation setting.}
Note further that this is obtained from a retriever component which typically uses the relevance between two documents to compute the similarity.
Alternatively, one may consider incorporating the output into the context to derive the \emph{ex-post relevance}, $P_\eta(a_i | x, y_i,y^{(1:t-1)})$, where $y_i$ the generated output when $a_i$ is selected.\footnote{We remark that this is impossible in~\cite{dubey2024auctions}, since the output is endogenous to the mechanism in their setup so that the mechanism computes the prominence of each ad in the output. On the other hand, since the generation of output is fully governed by LLM to optimize the output, our mechanism can retrieve the output as parameters as well.}
We denote the previous notion of relevance which marginalizes over the output by \emph{ex-ante} relevance.
There can be several ways to compare such relevance, \eg concatenating the query $x$ and the output $y_i$ and computes its merged document's relevance to the ad document.
This approach, however, comes at the cost of the additional computation complexity because one needs to precompute every possible $y_i$ for $i \in [n]$ to obtain the ex-post relevance.
This will further be discussed shortly in the complexity paragraph.

Further, similar to Assumption~\ref{ass:seg-single}, we may assume that our retrieval component to measure the ex-post relevance is calibrated to the ex-post CTR.
\begin{assumption}
    Let $C_i$ be the binary event that ad $i$ is clicked.
    We assume that $\tilde{q}_i^{(t)} = \mathbf{E}[C_i^{(t)} | x, y_i, y^{(1:t-1)}] \propto P_\eta(a_i | x, y_i, y^{(1:t-1)})$.
\end{assumption}
Then, assuming that the ex-post relevance is calibrated to the ex-post CTR, the segment auction with ex-post relevance ensures that allocation function always achieve better objective function.
\begin{proposition}
    Given a query $x$ and for any objective function $f: \Delta_n^T$, the optimal allocation with ex-post relevance achieves better (or equivalent) optimum than the allocation with ex-ante relevance.
\end{proposition}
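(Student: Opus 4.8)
The plan is to show that the optimization problem solved under ex-post relevance is a relaxation (or a refinement of the feasible region through additional information) of the one solved under ex-ante relevance, so its optimum can only be better. The key observation is that ex-ante relevance is precisely the marginalization of ex-post relevance over the generated output: by the tower property and the calibrated-retriever assumptions, $q_i^{(t)} \propto P_\eta(a_i \mid x, y^{(1:t-1)}) = \sum_{y_i} P_\eta(a_i \mid x, y_i, y^{(1:t-1)}) P(y_i \mid x, y^{(1:t-1)}) = \sum_{y_i} \tilde q_i^{(t)}(y_i) P(y_i \mid x, y^{(1:t-1)})$ up to the same normalizing constant. So an allocation rule that depends only on the ex-ante relevance can be viewed as a special case of an allocation rule permitted to depend on $y_i$: namely, one that ignores that extra information.

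First I would set up both optimization problems explicitly. The ex-ante problem is $\max_{\x} f(\x)$ over allocation rules $\x = \x(\b)$ measurable with respect to $(x, y^{(1:t-1)})$ and the bids, subject to $\x$ being a valid probability vector for each segment. The ex-post problem is the same objective but over the larger class of allocation rules $\tilde\x$ that may additionally condition on the realized outputs $\{y_i\}$ (equivalently, on the ex-post relevances $\tilde q_i^{(t)}$). Second, I would argue the feasible set of the ex-ante problem embeds into that of the ex-post problem: any ex-ante-feasible $\x$ is trivially ex-post-feasible (just don't use the extra conditioning). Third, since we are maximizing the same $f$ over a superset, the ex-post optimum is at least the ex-ante optimum — this is the one-line core of the argument. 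I would also note that when $f$ is itself an expectation over the output (as LSW is, once one writes it in terms of CTRs), the value of a fixed ex-ante rule is literally unchanged when reinterpreted as an ex-post rule that discards $y_i$, which makes the embedding value-preserving and hence the inequality clean.

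The main subtlety — and the step I would be most careful about — is making precise what "the same objective function $f$" means across the two settings, since $f$ is stated abstractly. If $f$ is defined directly in terms of the relevances and allocation probabilities (as with revenue, social welfare, relevance, or LSW in the experiments), then one must check that plugging in ex-post relevances and then taking expectations over $y_i$ recovers, for ex-ante rules, exactly the ex-ante objective value; this uses linearity of expectation in the welfare/relevance/revenue cases and concavity of $\log$ (Jensen) in the LSW case — in fact for LSW one gets a \emph{strict} improvement opportunity, since $\mathbf{E}[\log \tilde q] \le \log \mathbf{E}[\tilde q]$ means the ex-post-optimal rule can do strictly better by tailoring the allocation to each realized $y_i$. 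I would phrase the proposition's "any objective function" mildly informally to cover exactly the class of objectives that are well-defined as functions of (relevance, allocation) and hence can be evaluated in both regimes, and then the argument is just the superset-of-feasible-rules observation plus the marginalization identity. No hard calculation is required; the content is entirely in correctly formalizing the two decision problems and the information inclusion between them.
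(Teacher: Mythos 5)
Your proposal is correct and its core argument is exactly the paper's: the class of ex-post allocation rules (those allowed to condition on the precomputed outputs $(y_i)_{i\in[n]}$) contains the ex-ante rules as the constant special case, so maximizing the same objective over the larger class can only do at least as well. The paper's proof is precisely this embedding observation, stated for the optimal ex-ante vector $\z^{(t)}$ reinterpreted as a constant ex-post rule; your extra remarks on marginalization and Jensen are not needed for the statement but do not change the argument.
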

\begin{proof}
    To see why this holds, let $\z^{(t)}$ be the optimal allocation vector for the segment auction with ex-ante relevance for $t$-th segment.
    Notice that with ex-post relevance, we can optimize the allocation vector $\x^{(t)} \in \Delta_n$ given the precomputed output $(y_i)_{i \in [n]}$, \ie $\x^{(t)} = \x^{(t)}(y_1,\ldots, y_n)$.
    Here, however, we can restrict $\x^{(t)}(y_1,\ldots, y_n) = \x^{(t)}$, \ie uniformly the same over the output, and let $\x^{(t)} = \z^{(t)}$.
    Then, the resulting allocation becomes exactly equivalent to the optimal allocation for the ex-ante relevance.
    Therefore, marginalizing over the output would yield the same value of the objective function, implying that segment auction with ex-post relevance can at least achieve such quantity.
\end{proof}

\section{Complexity of mechanisms}\label{app:comp}
One another important aspect of the segment auction is to display the finalized output as fast as possible to not deteriorate user experience.
We here formally characterize the computational complexity required to run each mechanism from the user experience perspective.
To simplify the statements, we restrict our focus to the single allocation setting, however, a similar argument carries over to the multi-allocation setting.
Overall, to start generating $t$-th segment, the segment auction requires to retrieve relevance measures $q_i^{(t)}$ for every $i \in [n]$, decides the winner of the auction and corresponding payment.
Note, however, that the payment does not need to be calculated immediately, so we will only consider the computational/query complexity that will be required until the LLM starts generating the output.
For instance, one can store the data occurred during the segment auction and compute the payment in an asynchronous manner by looking up the historical data.

To quantify the overall latency of each mechanism, we define the notions of query complexity to each modules we have defined.
First, the \emph{LLM query} complexity denotes how many times the segment auction is required to call the \emph{LLM oracle} which gives an output of desired length given a query.
Next, the \emph{relevance query} complexity denotes the number of times the segment auction calls the \emph{relevance oracle} that computes the relevance between two given text documents.
The relevance oracle here can be deemed as an abstraction of the retrieval component in the RAG framework.
Typically, it is expected that the LLM query would be much more expensive than relevance query, which will be again much expensive than the bitwise operation dealt in the standard time complexity.

The complexity of an auction will be characterized by query complexities to each oracle as well as a time complexity required throughout the computation of the mechanism.
\begin{theorem}\label{thm:single-epic-comp}
    The single-allocation segment auction with ex-ante relevance has LLM query complexity of $O(1)$, relevance query complexity of $O(n)$, and time complexity of $O(n)$ to generate each segment.
\end{theorem}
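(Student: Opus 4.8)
The plan is to trace through the single-allocation segment auction (Figure~\ref{fig:segment-auction}) step by step and tally the calls to each oracle, treating the Gumbel draws, the score computation, the $\argmax$ operations, and the payment formula as ordinary arithmetic that costs $O(1)$ per bidder. First I would observe that generating the $t$-th segment requires, as input to the auction, the relevance measures $q_i^{(t)}$ for all $i \in [n]$; under the ex-ante assumption~\eqref{eq:independence} (or its mild generalization $q_i^{(t)} = \delta^{(t)} q_i$), each $q_i^{(t)}$ is obtained by a single call to the relevance oracle that compares the ad document of $\ad_i$ with the query $x$. That is $n$ relevance queries, and no further relevance queries are needed because the allocation rule~\eqref{eq:seg-single-alloc} depends only on the $q_i^{(t)}$ and the bids. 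Hence the relevance query complexity is $O(n)$.

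Next I would handle the LLM query complexity. Once the auction has fixed the winner $w = \argmax_i s_i$, the generative model~\eqref{eq:rag-general} calls $P_\theta(y^{(t)} \mid x, y^{(1:t-1)}, \ad_w)$ exactly once: we feed the winning ad's document into the prompt together with the prior segments and ask the LLM to produce the segment of desired length. This is a single LLM oracle call, so the LLM query complexity is $O(1)$. I would emphasize here the contrast with the ex-post variant of Appendix~\ref{app:ex-post}, where one must precompute $y_i$ for every $i$ and hence incur $O(n)$ LLM queries — that is exactly why the theorem is stated for the ex-ante relevance.

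For the time complexity, I would walk through the remaining bookkeeping in Figure~\ref{fig:segment-auction}: drawing $n$ independent Gumbel variates is $O(n)$; forming the scores $s_i = q_i b_i e^{\eps_i}$ is $O(n)$; computing the first and second maxima $w$ and $\ell$ is $O(n)$; and evaluating the per-click charge $z = q_\ell b_\ell e^{\eps_\ell} / (q_w e^{\eps_w})$ is $O(1)$. (If one instead wants the expected payment rule~\eqref{eq:seg-single-myerson}, one needs $w_{-i} = \sum_{j \neq i} q_j b_j$, still computable in $O(n)$; and as noted in the text this can be deferred and done asynchronously, so it does not even lie on the critical path to starting generation.) Summing, the time complexity to produce each segment is $O(n)$.

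I do not expect a genuine obstacle here — the statement is essentially a careful accounting exercise — so the writeup is mostly a matter of being precise about what counts as an oracle call versus an arithmetic operation, and about the modeling assumption (independent/ex-ante relevance, additive over segments per Assumption~\ref{ass:rich-segment}) that makes the single relevance query per ad suffice across all $T$ segments. The one point worth stating explicitly is that, because the allocation rule is scale-invariant in the $q_i$, the mechanism never needs the actual $\ctr_i^{(t)}$ and so no extra oracle calls are incurred for calibration; this justifies restricting attention to the $n$ relevance queries and nothing more.
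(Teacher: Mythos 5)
Your accounting matches the paper's own proof exactly: $n$ relevance-oracle calls to obtain the $q_i^{(t)}$, $O(n)$ arithmetic for the perturbed scores and the two maxima, and a single LLM call once the winner is fixed. The additional remarks on deferring the payment computation and on the ex-post contrast are consistent with the surrounding discussion in the appendix, so there is nothing to correct.
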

\begin{proof}
    To execute the mechanism, it first needs to compute the relevance measure $q_i^{(t)}$ for each $i \in [n]$, which requires $n$ calls of relevance oracle.
    After then, we need to sample the random noise, and compute the largest and the second-largest perturbed bid $q_i^{(t)}b_ie^{\eps_i}$, which requires $O(n)$ time complexity.
    Then, the output can be generated by a single query to LLM.
    This completes the proof.
\end{proof}

On the other hand, the following theorem explicitly shows that dealing with ex-post relevance requires more LLM query complexity.
\begin{theorem}
    The single-allocation segment auction with ex-ante relevance has LLM query complexity of $O(n)$, relevance query complexity of $O(n)$, and time complexity of $O(n)$ to generate each segment.
\end{theorem}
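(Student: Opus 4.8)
The plan is to reduce the claim directly to the accounting already carried out for Theorem~\ref{thm:single-epic-comp}, since the mechanism under consideration is the same single-allocation segment auction, now stated with \emph{ex-ante} relevance. First I would recall that the ex-ante relevance scores $q_i^{(t)} = P_\eta(a_i \mid x, y^{(1:t-1)})$ depend only on the query, the previously generated segments, and the ad documents — crucially, not on any hypothetical output $y_i$ that would be produced if ad $i$ were selected. Consequently, executing the auction requires no candidate-output generation at all: we compute the $n$ relevance scores, perturb the bids by independent Gumbel noise, and determine the winner $w$ and runner-up $\ell$ by a single linear scan. This is exactly the procedure analyzed in the earlier theorem.

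For the relevance query and time complexities I would invoke the same estimates verbatim. Computing $q_i^{(t)}$ for every $i \in [n]$ costs $n$ calls to the relevance oracle, giving relevance query complexity $O(n)$; finding the two largest perturbed scores $q_i^{(t)} b_i e^{\eps_i}$ is an $O(n)$ linear scan, and the price $z$ may be deferred and computed asynchronously, giving time complexity $O(n)$. Neither bound changes between the two theorems, so no new argument is needed for these two quantities.

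The only point requiring care is the LLM query complexity. Because the ex-ante scores are formed without reference to any generated candidate, once the winner $w$ is fixed the entire segment $y^{(t)}$ is produced by a single LLM call conditioned on $a_w$; the mechanism therefore issues $O(1)$ LLM queries per segment. Since $O(1) \subseteq O(n)$, the claimed bound of $O(n)$ holds \emph{a fortiori}. I would thus present the proof as: all three bounds follow from Theorem~\ref{thm:single-epic-comp}, with the LLM query bound obtained simply by relaxing the sharper $O(1)$ estimate to $O(n)$. The one subtlety worth flagging is that this $O(n)$ LLM bound is not tight for the ex-ante variant — genuinely $\Theta(n)$ LLM calls arise only under \emph{ex-post} relevance, where a candidate output $y_i$ must be generated for each ad $i$ before the perturbed scores can be formed — so the entire content of the statement here is confirming that the looser bound continues to hold in the ex-ante setting, which is immediate. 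There is no real obstacle; the main task is merely to state which estimates carry over unchanged and which one is being weakened.
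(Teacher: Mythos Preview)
Your proof is correct for the statement as literally written, and you have in fact already put your finger on the real issue: the theorem statement contains a typo. The sentence immediately preceding this theorem in the paper reads ``the following theorem explicitly shows that dealing with ex-post relevance requires more LLM query complexity,'' and the paper's own proof argues that computing each $q_i^{(t)}$ under \emph{ex-post} relevance requires precomputing the candidate output $y_i$ for every $i \in [n]$, which forces $O(n)$ LLM calls; once the winner is selected no further LLM call is needed since all outputs are already generated. So the paper is establishing a genuine $\Theta(n)$ LLM bound for the ex-post variant, not a slack $O(n)$ bound for the ex-ante one.

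Your route --- invoke Theorem~\ref{thm:single-epic-comp} verbatim and relax $O(1)$ to $O(n)$ --- is logically impeccable for the statement as printed, and you deserve credit for explicitly flagging that the bound is not tight and that the real content lies in the ex-post case. The difference from the paper is therefore not a mathematical disagreement but a mismatch in what is being proved: the paper intends ``ex-post'' in the hypothesis and gives the substantive argument for why $n$ LLM calls are actually needed, whereas you prove the (trivial) literal claim. If you were to align with the paper, the only change needed is to replace your final paragraph's observation with the one-line argument you already sketched there: under ex-post relevance, forming $\tilde q_i^{(t)} \propto P_\eta(a_i \mid x, y_i, y^{(1:t-1)})$ requires generating $y_i$ for each $i$, hence $O(n)$ LLM queries.
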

\begin{proof}
    The only difference is that, to compute the relevance $q_i^{(t)}$ for each $i \in [n]$, it requires the computation of $y_i$ in advance, which requires $O(n)$ LLM query complexity.
    Note here that after the winner of the auction is selected, there is no need of further calling the LLM oracle since all the outputs are already generated.
\end{proof}

\section{Beyond the independent segment auction}
In Section~\ref{sec:seg-single}, we restrict our attention to the case in which the relevance measure does not depend on the previous segments generated, but only on the query.
We here generalize this limitation by introducing a general segment auction.
In the general segment auction, we allow the relevance to be a function of the previous segments, which implies that for $t \neq t' \in [T]$, it might be the case that $q_i^{(t)} \neq q_i^{(t')}$.
Importantly, each $q_i^{(t)}$ is in fact a function of all the previous segments $y^{(1:t-1)}$ and decision variables $(i_1,i_2,\ldots, i_{t-1})$ therein. That is, the relevance should be indexed by $q_i^{(t), (i_1,\ldots, i_{t-1})}$.
Thus, an allocation $(i_1,i_2,\ldots, i_T)$ that maximizes the logarithmic NSW should solve the following optimization problem.
\begin{align*}
    \max_{(\x^{(1)},\ldots, \x^{(T)}) \in \Delta_n^T}\sum_{t \in [T]} \sum_{i \in [n]} v_i q_i^{(t), (i_1,\ldots, i_{t-1})} \log x_i^{(t)}.
\end{align*}
Note here that once $x_i^{(t)}$ is determined, the subsequent segment $i_t$ will be sampled according to the proper probability distribution, which then affects the quantity $q_i^{(t),(i_1,\ldots, i_{t})}$.
Since each $q_i^{(t),(i_1,\ldots, i_{t-1})}$ can have arbitrary value, solving the optimization problem above requires (i) exhaustive search over the space $[n]^T$ to obtain the quantities $q_i^{(t),(i_1,\ldots, i_{t})}$ for every $t \in [T]$ and every $(i_1,\ldots, i_T) \in [n]^T$, and (ii) optimization over each probability simplex $\x^{(t)}$ for $t \in [T]$ given the exponentially many parameters.
Overall, this is computationally infeasible in practice, particularly given the limited latency in the online advertisement system. 

Instead, our segment auction can be deemed as a greedy algorithm that approximates the globally optimal allocation rule.
Indeed, given the previous tokens, the segment auction chooses the next token to maximize the logarithmic social welfare up to the subsequent round.
This is straightforward to see since the logarithmic social welfare is decomposable over each token and our segment auction chooses a token with respect to the probability that maximizes the single round logarithmic social welfare for the next round.
\begin{proposition}\label{prop:seg-single-local}
    Given a query $x$ and previous segments $y^{(1:t-1)}$, the segment auction chooses the next token that maximizes LSW up to the next token.
\end{proposition}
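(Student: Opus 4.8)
The plan is to unwind the definitions so that Proposition~\ref{prop:seg-single-local} reduces to Theorem~\ref{thm:seg-single-linear} applied to a single segment, in the setting where the relevance vector $\q^{(t)}$ may depend on the history $(y^{(1:t-1)}, i_1,\ldots,i_{t-1})$. First I would fix the query $x$ and the realized previous segments $y^{(1:t-1)}$, which in particular fixes the decision variables $i_1,\ldots,i_{t-1}$ and hence fixes the relevance values $q_i^{(t)} := q_i^{(t),(i_1,\ldots,i_{t-1})}$ for every $i \in [n]$. Conditioned on this history, the "LSW up to the next token" is precisely $\LSW^{(1:t-1)} \cdot \LSW^{(t)} = \LSW^{(1:t-1)} \cdot \prod_{i \in [n]} (x_i^{(t)})^{v_i q_i^{(t)}}$, where $\LSW^{(1:t-1)} = \prod_{s \le t-1}\prod_{i}(x_i^{(s)})^{v_i q_i^{(s)}}$ is a constant (it no longer depends on the choice made at segment $t$). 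Maximizing the product over the choice of allocation vector $\x^{(t)} \in \Delta_n$ is therefore equivalent to maximizing $\LSW^{(t)}$, i.e. to the single-segment optimization $\max_{\x^{(t)} \in \Delta_n} \sum_{i \in [n]} v_i q_i^{(t)} \log x_i^{(t)}$.

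Next I would invoke Theorem~\ref{thm:seg-single-linear}: with the history held fixed, the $t$-th round of the segment auction is exactly a single-allocation segment auction on the instance with relevances $\q^{(t)}$ and bids $\b$, and that theorem states its allocation rule $\hq_i^{(t)} = b_i q_i^{(t)} / \sum_j b_j q_i^{(t)}$ (equivalently, the Gumbel-max selection probabilities of Figure~\ref{fig:segment-auction}) maximizes $\LSW^{(t)} = \prod_i (x_i^{(t)})^{v_i q_i^{(t)}}$ over $\Delta_n$, assuming truthful bidding. The only mild subtlety is the normalization issue flagged in the paper — LSW is stated in terms of $q_i$ rather than $\ctr_i^{(t)}$ — but by Assumption~\ref{ass:rich-segment} the calibrated relevance and the CTR are proportional segment-wise, so the two induce the same optimization problem and the same maximizer; I would note this in one sentence. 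Combining the two steps: since $\LSW^{(1:t-1)}$ is a positive constant and the segment auction picks $\x^{(t)}$ maximizing $\LSW^{(t)}$, it picks $\x^{(t)}$ maximizing $\LSW^{(1:t-1)}\cdot\LSW^{(t)}$, which is the claim.

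I expect the main obstacle to be conceptual/bookkeeping rather than technical: making precise what "maximizes LSW up to the next token" means when the relevances at later segments are themselves endogenous to the current choice. The honest statement is a \emph{greedy / myopic} optimality — the auction optimizes the contribution of the current segment to the cumulative LSW, taking the history as given and ignoring downstream effects on future $q_i^{(s)}$, $s > t$ — and indeed the discussion preceding the proposition frames it exactly this way ("our segment auction can be deemed as a greedy algorithm that approximates the globally optimal allocation rule"). So I would phrase the proof to emphasize the decomposability of $\log \LSW$ over segments: $\log \LSW = \sum_{s \in [T]} \sum_{i \in [n]} v_i q_i^{(s)} \log x_i^{(s)}$, the term for segment $t$ is the only one affected by the choice at segment $t$ once we condition on the past and do not look ahead, and maximizing that single term is what Theorem~\ref{thm:seg-single-linear} does. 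No new calculation is needed beyond this reduction.
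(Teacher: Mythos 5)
Your proposal is correct and follows essentially the same route as the paper, which justifies the proposition in one sentence via the decomposability of $\log \LSW$ across segments together with the single-segment optimality established in Theorem~\ref{thm:seg-single-linear}. Your write-up is in fact more careful than the paper's: you make explicit that conditioning on the history fixes $q_i^{(t),(i_1,\ldots,i_{t-1})}$, that $\LSW^{(1:t-1)}$ is a positive constant unaffected by the choice at segment $t$, and that the claim is a greedy/myopic optimality rather than a global one — all of which the paper leaves implicit.
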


\section{Further notion of fairness and its connection to LSW}\label{app:fair}
We here explain several standard fairness concepts widely used in the literature, and how they are relevant to LSW.
Given $n$ agents, let $X \subseteq \R_{\ge 0}^n$ be the space of feasible allocation, $\x \in X$ be an allocation and $\w \in \R_{\ge 0}^n$ be a corresponding weights.
Such a weight may represent an entitlement of an agent that is exogenous from the allocation rule.
For example, in a network bandwidth allocation scenario, a user with a higher weight might represent a critical application that requires more guaranteed resources, and in economic planning, different sectors (e.g., healthcare, education) might be assigned different weights based on their societal importance, influencing the distribution of resources.
Suppose that each agent $i \in [n]$ has a valuation function $v:X \to \R_{\ge 0}$.

The standard notion of Nash social welfare is defined as the following~\citep{nash1950bargaining}.
\begin{definition}[Nash social welfare]
    The weighted Nash social welfare is defined as $\prod_{i \in [n]} \parans{(u_i(x_i))^{w_i}}^{1/\sum_{i \in [n]}w_i}$.
    If $w_i = 1$ for $i \in [n]$, then it is simply called the Nash social welfare.
\end{definition}
Nash social welfare recently receive a tons of attraction from the computer science as well as economics literature~\citep{caragiannis2019unreasonable}, since it well balances the trade-off between the efficiency and fairness.



The following notion of proportional fairness is broadly studied in the resource allocation literature, in particular from the network utility maximization perspective~\citep{kelly1997charging,lan2010axiomatic,yang2007vcg}
\begin{definition}[Weighted proportional fairness]
    The allocation $\x \in X$ satisfies weighted proportional fairness if for any other allocation $y \in X$, it satisfies
    \begin{align*}
        \sum_{i \in [n]} w_i \frac{y_i - x_i}{x_i} \le 0.
    \end{align*}
    If $w_i = w$ for $i \in [n]$, then this is simply called the proportional fairness.
\end{definition}
In essence, if we want to increase a certain coordinate $i$'s allocated resource by $\delta$, then it comes at the cost of decreasing some other agent $j$'s resource by $\delta$ (or summing over others), and proportional fairness precisely implies that the overall change of proportional utility $\delta/x_i - \delta/x_j$ is always nonpositive.

Finally, these notions have close connection to the following competitive equilibrium from equal incomes~\citep{varian1973equity, arrow1981handbook,budish2011combinatorial}.
\begin{definition}[CEEI]
    Consider $n$ agents and $m$ (divisible) goods.
    Each good $j \in [m]$ has a supply of $s_j$.
    Each agent has a budget of $w_i$ unit of currency.
    Let $p_j$ be the price of good $j$.
    Let $u_i(\x_i)$ be the utility functoin of agent $i$ for the bundle $\x_i = (x_{i1},x_{i2},\ldots, x_{im})$ where $x_{ij}$ represents the quantities of each good $j$ allocated to agent $i$.
    A competitive equilibrium is a set of prices $\p$ and allocation $\x$ such that (i) each agent maximizes their utility, \ie $x_i \in \argmax_{y}u_i(y)$ subject to $\sum_{j \in [m]}p_j y_j \le 1$, and (ii) the market clears, \ie $\sum_{i \in [n]}x_{ij} = s_j$ for $j \in [m]$.
\end{definition}

These concepts are independently discovered by several different communities including economics as the solution of the bargaining problem~\citep{nash1950bargaining} and as the concept of competitive equilibrium~\citep{varian1973equity}, and also as the solution network scheduling problem~\citep{kelly1997charging}.
It was a folklore knowledge for decades that all these notions induce the same allocation vector for divisible goods.
\begin{theorem}\label{thm:fair-connection}
    Given weights $\w$, an allocation $x \in X$ that maximizes the weighted NSW satisfies weighted proportional fairness, and coincides with the competitive equilibrium given the weighted budget.
\end{theorem}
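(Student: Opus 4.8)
The plan is to prove the classical chain of equivalences between weighted Nash social welfare maximization, weighted proportional fairness, and competitive equilibrium from equal incomes (CEEI), under the standing assumptions that the feasible set $X \subseteq \R_{\ge 0}^n$ is convex and compact, the utilities $u_i$ are concave and differentiable, and the optimum puts strictly positive utility on every agent (which is exactly the point flagged after Theorem~\ref{thm:seg-single-linear}: $\LSW^{(t)} = 0$ whenever some $x_i^{(t)} = 0$, so the maximizer is interior in the relevant sense). I would first reduce to the log-form: maximizing the weighted NSW $\prod_i u_i(x_i)^{w_i/\sum_j w_j}$ is equivalent to maximizing $\sum_i w_i \log u_i(x_i)$, a concave program over the convex set $X$.

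First I would establish NSW $\Rightarrow$ proportional fairness. Let $\x^\star$ be the maximizer of $\sum_i w_i \log u_i(x_i)$. For any other feasible $\y \in X$, convexity of $X$ gives $(1-\lambda)\x^\star + \lambda \y \in X$ for $\lambda \in [0,1]$, so the function $\phi(\lambda) = \sum_i w_i \log u_i((1-\lambda)x_i^\star + \lambda y_i)$ is maximized at $\lambda = 0$, hence $\phi'(0^+) \le 0$. Computing the derivative via the chain rule and using that, in the paper's setting, $u_i$ is (effectively) linear in the allocated quantity so that $u_i(x_i) \propto x_i$ and the directional derivative of $u_i$ along $y_i - x_i^\star$ is proportional to $y_i - x_i^\star$, this inequality becomes exactly $\sum_i w_i \frac{y_i - x_i^\star}{x_i^\star} \le 0$, which is weighted proportional fairness. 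The converse direction (proportional fairness $\Rightarrow$ NSW maximizer) follows because the same first-order condition is both necessary and, by concavity of the log-objective, sufficient for global optimality.

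Next I would handle the CEEI equivalence. Given the NSW maximizer $\x^\star$, I would construct candidate prices from the KKT multipliers of the concave program: the stationarity condition reads $\sum_i w_i \frac{\nabla u_i(x_i^\star)}{u_i(x_i^\star)} = \boldsymbol{\mu} - \boldsymbol{\nu}$ where $\boldsymbol{\mu}$ is the multiplier for the supply/feasibility constraints and $\boldsymbol{\nu} \ge 0$ the multipliers for nonnegativity (inactive at an interior optimum). Setting the good prices $\p$ proportional to $\boldsymbol{\mu}$ and giving each agent budget $w_i$, I would verify the two CEEI conditions: (i) each agent's bundle $x_i^\star$ maximizes $u_i$ subject to $\langle \p, \cdot\rangle \le w_i$ — this is precisely the per-agent first-order condition extracted from the aggregate KKT system, after checking the budget binds with the right normalization; and (ii) market clearing, which is the primal feasibility of $\x^\star$ together with complementary slackness (prices are zero on non-scarce goods). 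Conversely, a competitive equilibrium allocation satisfies these same first-order conditions and hence, by concavity, maximizes the weighted NSW. I would cite \citet{nash1950bargaining}, \citet{varian1973equity}, and \citet{kelly1997charging} for the provenance of the three formulations, and note that the argument is the standard Eisenberg–Gale-type duality.

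The main obstacle is not any single deep step but making the hypotheses precise enough that the first-order calculus is valid: one needs $X$ convex, the $u_i$ concave and smooth with the optimum bounded away from the boundary (so $\log u_i$ is differentiable there and the nonnegativity multipliers vanish), and a constraint qualification (e.g. Slater's condition) so that KKT multipliers exist and the price vector is well-defined. In the paper's allocation setting $X = \Delta_n^T$ (a product of simplices) and $u_i$ linear, all of this holds, and the interior-optimum condition is guaranteed by the $\LSW = 0$-at-the-boundary observation; I would state these as the standing assumptions of the theorem and then the three-way equivalence is routine.
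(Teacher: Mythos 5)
The paper never actually proves Theorem~\ref{thm:fair-connection}: it is presented as a folklore result, with the surrounding text merely attributing the three formulations to \citet{nash1950bargaining}, \citet{varian1973equity}, and \citet{kelly1997charging}, and Appendix~\ref{app:proof} contains no corresponding argument. Your proposal therefore supplies what the paper omits, and it is the standard Eisenberg--Gale-type proof, correct in outline: the reduction to the concave program $\max_{x\in X}\sum_i w_i\log u_i(x_i)$; the variational inequality $\phi'(0^+)\le 0$ along the segment from $x^\star$ toward any feasible $y$, which for (effectively) linear $u_i$ becomes exactly $\sum_i w_i (y_i-x_i^\star)/x_i^\star\le 0$, matching the paper's allocation-based definition of weighted proportional fairness; and the identification of CEEI prices with the KKT multipliers of the supply constraints, with the budget $w_i$ binding by Euler's identity for linear utilities. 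Two caveats are worth recording. First, the regularity hypotheses you introduce (convexity of $X$, concavity and differentiability of $u_i$, an interior optimum, a constraint qualification) are genuinely load-bearing and absent from the theorem as stated, so you are proving a cleaned-up version of it; in the paper's intended application ($X$ a product of simplices, linear utilities, weights $q_i v_i>0$ forcing the LSW maximizer off the boundary) they do hold, as you note. Second, the paper's CEEI definition lives in an $m$-good exchange economy with supplies and bundles $x_{ij}$, whereas NSW and proportional fairness are posed over an abstract $X\subseteq\mathbb{R}^n_{\ge 0}$, so ``coincides with the competitive equilibrium'' is only meaningful after identifying $X$ with the feasible allocations of that economy; this imprecision is in the paper's statement rather than in your argument, but a fully rigorous write-up would need to fix the correspondence explicitly.
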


Interestingly, our notion of logarithmic social welfare can be deemed as a version of the weighted NSW in which $q_iv_i$ is the weight $w_i$, and the utility is only about the allocation $x_i$, since $\LSW = \prod_{i \in [n]} x_i^{q_iv_i}$.
The main difference is that the actual utility is decomposed into two parts of allocation ($x_i$) and the per-allocation utility ($q_iv_i$), and further the monetary transfer is not considered at all.
Therefore, one might interpret LSW as a measure that captures the balance between the \emph{allocational} efficiency and fairness given that each advertiser has an entitlement of $q_iv_i$.
Its further connections to proportional fairness and CEEI can be analogously argued as per Theorem~\ref{thm:fair-connection},

\section{Combinatorial segment auction}\label{app:comb-seg}
We here present a combinatorial generalization of the multi-allocation segment auction presented in Section~\ref{sec:multi-seg}.
Let $\cA = 2^{[n]}$ be the power set over $[n]$, and $\cA_k = \{A \in \cA: |A| = k\}$.
In the RAG-sequence model~\citep{lewis2020retrieval} as written in~\eqref{eq:rag-original}, the term $p_\eta(z_i | x)p_\theta(y | x, z_i)$ corresponds to conditioning the output $y$ on each single document $z_i$.\footnote{Note that even though we compute the normalized probability by summing over $i \in [n]$, this is not about conditioning the output on multiple ads, but rather conditioning the output on each single ad, and selecting the output by marginalizing over every possible choice of the single ads.}
Recall that $\cA_k$ is the collection of set of ads whose cardinality is $k$.
To advertise $k$ ads within a single segment, we introduce the following equation similar to~\eqref{eq:rag-general} using a combinatorial variant of RAG-sequence model.
\begin{align}\label{eq:rag-multi-general}
    P(y^{(t)} | x, y^{(1:t-1)}) = \sum_{A \in \cA_k} P_\eta (z_A | x,y^{(1:t-1)};\b) P_\theta(y^{(t)} | x, z_A, y^{(1:t-1)}),
\end{align}
where $z_A$ is a document that represents the ads included in $A$.

To construct our intuition towards the auction, first assume that the bids are uniformly the same.
Then, our each probability in the RAG equation should boil down to $P_\eta(z_A | x, y^{(1:t-1)})$.
Here, similar to Assumption~\ref{ass:seg-single}, we can deem this probability as an indirect measure of the relevance of the set of ads $A$ to the query $x$.
However, we cannot exactly relate the average CTR of individual ad $i \in A$ with $P_\eta(z_A | x, y^{(1:t-1)})$, since $z_A$ only accounts for the set-wise property but not how much individual ad $i$ is relevant to the query context.

Alternatively, we can decompose the probability $p_\eta(z_A | x, y^{(1:t-1)})$ by a summation over the \emph{prominence} of each ad $i \in A$ that contributes to the overall relevance of $A$, \ie
\begin{align*}
    P_\phi(z_A | x, y^{(1:t-1)}) = \sum_{i \in [n]}P_\phi(z_i | x,y^{(1:t-1)}, z_A)P_\phi(z_A | x, y^{(1:t-1)}),
\end{align*}
where $\sum_{i \in [n]}P_\phi(z_i | x,y^{(1:t-1)}) = 1$.
The parameter $\phi$ denotes the model that computes the prominence of each $i$ and overall relevance $A$, which is different from the previous retriever's parameter $\eta$ since it will be calculated in a different manner.
Note here that we specify the relevance of the set $A$ to be retrieved from a different model with $\phi$ not $\eta$, since if we simply append or summarize a set of ad documents and measure the relevance, there could be some biases due to positional effects.
This implies one may need to implement an individual module to capture the overall relevance.
For instance, one can use a heuristic of $q_A = \alpha \cdot \sum_{i \in A} q_i + \beta \sum_{i \neq j \in A} \text{rel}(a_i, a_j)$ for a proper choice of weights $\alpha \ge 0$ and $\beta \ge 0$.
We also emphasize that our mechanism does not control the prominence but is given by the LLM in a exogenous manner, so one only needs to implement a proper relevance/prominence computation module that is well-calibrated with the CTR.

Then, we can impose the following assumption analogous to Assumption~\ref{ass:seg-single}.
\begin{assumption}
    In a $t$-th segment, let $C_{A,i}^{(t)}$ be the binary event that ad $i$ is clicked in $t$-th segment if it is advertised with the set of ads $A$. We assume that
    \begin{align*}
        q_{A,i}^{(t)} \equiv \mathbf{E}[C_{A,i}^{(t)} | x, y^{(1:t-1)}] = \int_{y_A^{(t)}} \mathbf{E}[C_{A,i} | x, y^{(1:t-1)}, y_A^{(t)}] dy_A^{(t)} \propto p_\phi(z_i | x,y^{(1:t-1)}, z_A)p_\eta(z_A | x, y^{(1:t-1)}).
    \end{align*}
\end{assumption}

Correspondingly, we can define the linear aggregation function as follows.
\begin{align}
    \hq_A^{(t)} = p_\phi(z_A | x, y^{(1:t-1)}; \b) = \frac{\sum_{i \in A} q_{A,i}^{(t)}b_i}{\sum_{B \in \cA_k}\sum_{i \in B} q_{B,i}^{(t)}b_i}.
\end{align}
If all the bids are the same, this reduces to the baseline RAG equation~\eqref{eq:rag-multi-general} such that $p_\eta(z_A | x, y^{(1:t-1)}) = \sum_{i \in A}p_\phi(z_i | x,y^{(1:t-1)}, z_A)p_\eta(z_A | x, y^{(1:t-1)})$, which is proportional to $\sum_{i \in A}q_{A,i}$.

Finally, given the number of winners $k$, the combinatorial segment auction for $t$-th segment can rigorously defined as follows.
\begin{itemize}
    \item Collect $\b$ and $\q^{(t)} = (q_{A,i}^{(t)})_{A \in \cA_k, i \in A}$
    \item Draw $\eps_A \sim \text{Gumbel}(0,1)$ for each $A \in \cA_k$.
    \item Compute $s_A = \sum_{i \in A}q_{A,i}^{(t)} b_i \cdot e^{\eps_A}$.
    \item Pick $A^* = \argmax_{A \in \cA_k} s_A$.
    \item For each $i \in A^*$:
    \begin{itemize}
        \item Find $A'(i) = \argmax_{A \in \cA_k: i \notin A} s_A$.
        \item Charge each $i \in A^*$ the following VCG price per click:
        \begin{align*}
            p_i = \left( s_{A'(i)} - ( \sum_{j \in A^* \setminus \{i\}} q_{A^*,j}^{(t)}b_je^{\eps_{A^*}}) \right) / (q_{A^*,i}^{(t)}e^{\eps_{A^*}}).
        \end{align*}
    \end{itemize}
\end{itemize}
We further assume an analogue of Assumption~\ref{ass:rich-segment} stating that the advertiser's utility is additive over each segment, \ie per-impression utility is $u_i = \sum_{t \in [T]} u_i^{(t)} = \sum_{t \in [T]} v_i q_{A,i}x_i^{(t)} - p_i^{(t)}$.

Unlike the single allocation setting, ad $i$'s utility is positive whenever a set $A$ that includes $i$ is selected.
Thus, we define the following variant of the weighted NSW.
\begin{definition}
    The combinatorial logarithmic social welfare (CLSW) is defined as follows.
    \begin{align*}
        \CLSW= \prod_{i \in [n]} \left( \prod_{A \in \cA_k: i \in A} x_A^{q_{A,i}^{(t)}}\right)^{v_i} .
    \end{align*}
\end{definition}

The following theorem states that the presented multi-allocation segment exhibits several nice properties, proof of which can be found in Appendix~\ref{app:proof}.
\begin{theorem}\label{thm:seg-multi-linear}
    Given a query $x$ and number of slots $k$, the combinatorial segment auction is DSIC, IR, Pareto efficient and has the maximal CLSW among independent segment auctions.
\end{theorem}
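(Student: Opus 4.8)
The plan is to obtain the four properties from two ingredients already used for the single-allocation case (Theorems~\ref{thm:seg-single-linear} and~\ref{thm:seg-single-ex-ante}): a Vickrey--Clarke--Groves (VCG) argument applied after conditioning on the Gumbel draw, and the Gumbel-max identity together with a weighted-logarithm optimization. First I would condition on a fixed realization of the noise $(\eps_A)_{A \in \cA_k}$ and exhibit the resulting deterministic auction as a VCG mechanism over the candidate winner-sets $\cA_k$; this yields DSIC and IR. I would then un-condition, use the Gumbel-max trick to identify the induced distribution over winning sets with the linear-aggregation rule $\hq_A^{(t)}$, show this distribution maximizes CLSW, and deduce Pareto efficiency as a corollary.

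For DSIC and IR, fix the draw $(\eps_A)_A$ and treat each set $A \in \cA_k$ as an outcome in which advertiser $i \in A$ has declared per-impression value $q_{A,i}^{(t)} b_i e^{\eps_A}$, so that the declared welfare of outcome $A$ is exactly the score $s_A = \sum_{i \in A} q_{A,i}^{(t)} b_i e^{\eps_A}$. The mechanism selects $A^* \in \argmax_A s_A$ and, for each $i \in A^*$, charges the per-click price $p_i$. I would verify that $p_i$ is the Clarke pivot payment expressed per click: its numerator $s_{A'(i)} - \sum_{j \in A^* \setminus \{i\}} q_{A^*,j}^{(t)} b_j e^{\eps_{A^*}}$ is the best declared welfare attainable without $i$, attained at $A'(i) = \argmax_{A : i \notin A} s_A$, minus the declared welfare the remaining winners receive under $A^*$, and dividing by the per-click weight $q_{A^*,i}^{(t)} e^{\eps_{A^*}}$ converts this per-impression externality to a per-click charge. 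Because $b_i$ enters only through the known multipliers $q_{A,i}^{(t)} e^{\eps_A}$, the advertiser's type lives in a one-dimensional subspace of the VCG type space that contains the truthful report $b_i = v_i$; truthfulness of VCG on the full space therefore restricts to DSIC here, exactly as in Theorem~\ref{thm:seg-single-ex-ante}. IR follows since optimality of $A^*$ gives $s_{A^*} \ge s_{A'(i)}$, hence $p_i \le b_i$ and the winner's per-click utility $v_i - p_i \ge 0$; losers pay nothing. As each realization of the noise gives a DSIC, IR deterministic auction, the randomized mechanism inherits both properties, the former via~\citet{mehta2004randomized}.

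For maximal CLSW I would proceed in two steps. The Gumbel-max trick gives the allocation: writing $\theta_A := \sum_{i \in A} q_{A,i}^{(t)} b_i$ and using monotonicity of $\log$, $\argmax_A \theta_A e^{\eps_A} = \argmax_A (\log \theta_A + \eps_A)$, so with $\eps_A$ i.i.d.\ $\mathrm{Gumbel}(0,1)$ we obtain $\pr(A^* = A) = \theta_A / \sum_{B \in \cA_k} \theta_B = \hq_A^{(t)}$, i.e.\ the auction implements the linear-aggregation rule. Taking logarithms of the per-segment CLSW and exchanging the order of summation,
\begin{align*}
    \log \CLSW = \sum_{i \in [n]} v_i \sum_{A \in \cA_k : i \in A} q_{A,i}^{(t)} \log x_A = \sum_{A \in \cA_k} W_A \log x_A, \qquad W_A := \sum_{i \in A} v_i q_{A,i}^{(t)},
\end{align*}
which is the standard weighted-logarithm objective over the simplex $\{x_A \ge 0,\ \sum_{A} x_A = 1\}$; a Lagrangian (equivalently, a Gibbs-inequality) argument gives the unique maximizer $x_A^\star = W_A / \sum_{B} W_B$. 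Under truthful bidding $\theta_A = W_A$, so $\hq_A^{(t)} = x_A^\star$ and the auction's allocation is exactly the CLSW maximizer; by Assumption~\ref{ass:rich-segment} and independence across segments the full CLSW factorizes over $t$, so per-segment optimality gives optimality among all independent segment auctions. Pareto efficiency then follows for free: the CLSW maximizer maximizes a sum $\sum_i v_i U_i$ with strictly positive weights $v_i$ of the advertisers' (monotone transforms of) expected allocative utilities $U_i = \sum_{A \ni i} q_{A,i}^{(t)} \log x_A$, and a maximizer of a positively weighted sum of individual objectives is necessarily Pareto optimal.

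The main obstacle is the VCG bookkeeping in the combinatorial, set-dependent, per-click setting. I must confirm that the correct opt-out welfare for $i$ is $\max_{A : i \notin A} s_A$ rather than some within-$A^*$ replacement of $i$, that a single scalar bid $b_i$ is a legitimate restricted VCG type despite the set-dependence of $q_{A,i}^{(t)}$, and that the per-click normalization by $q_{A^*,i}^{(t)} e^{\eps_{A^*}}$ carries truthfulness from the per-impression VCG to the per-click charge. The remaining steps --- the Gumbel-max identity and the weighted-logarithm optimization --- are routine, so I expect essentially all the difficulty to lie in this verification (and, if one additionally wants revenue non-negativity $p_i \ge 0$, in exhibiting an $i$-free set whose score dominates the other winners' contribution, which unlike the $k = 1$ case is not immediate once relevances depend on the chosen set).
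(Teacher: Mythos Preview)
Your proposal follows the paper's proof closely: both use the Gumbel-max identity to identify the induced allocation with $\hq_A^{(t)}$, a Lagrangian/weighted-log argument (after the same exchange of summation) to show this maximizes CLSW, and a VCG/externality argument conditioned on the noise draw for DSIC and IR. You supply considerably more detail than the paper---which simply defers DSIC/IR to ``a similar argument'' as Theorem~\ref{thm:seg-single-linear}---and you also give an argument for Pareto efficiency, which the paper's own proof does not address; your flagged obstacle (that the per-click normalization by $q_{A^*,i}^{(t)} e^{\eps_{A^*}}$ must be checked to preserve VCG truthfulness, and that nonnegativity of $p_i$ is not automatic) is exactly the subtlety the paper glosses over, and indeed the paper itself notes after the theorem that the payment can be negative.
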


We further show that it exhibits the following complexity measures.
\begin{proposition}
    Combinatorial segment auction hasLLM query complexity of $O(1)$, relevance query complexity of $O(kn^k)$, and time complexity of $O(n^k)$.
\end{proposition}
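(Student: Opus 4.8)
The plan is to prove the three bounds by a direct accounting of the operations in the combinatorial segment auction of Appendix~\ref{app:comb-seg}, mirroring the argument for Theorem~\ref{thm:single-epic-comp}. The three quantities correspond to three distinct resources: calls to the LLM oracle, calls to the relevance oracle, and elementary (bitwise) operations. Throughout I treat $k$ as a fixed constant, so that $|\cA_k| = \binom{n}{k} = O(n^k)$ and any factor of $k$ can be absorbed into the time bound.

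First I would establish the LLM query bound of $O(1)$. This follows exactly as in the ex-ante case of Theorem~\ref{thm:single-epic-comp}: the auction operates on the ex-ante relevances $q_{A,i}^{(t)}$, so no per-set output needs to be precomputed in order to run the allocation. Once the winning set $A^*$ is selected, the LLM is queried a single time to generate $y^{(t)}$ conditioned on the concatenated document $z_{A^*}$, as in~\eqref{eq:rag-multi-general}. The payment computation uses only the recorded scores $s_A$ and can be performed asynchronously without any further LLM call, so the total LLM query complexity is $O(1)$.

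Next I would count relevance-oracle calls. To form the input $\q^{(t)} = (q_{A,i}^{(t)})_{A \in \cA_k,\, i \in A}$, the mechanism must obtain, for each of the $\binom{n}{k} = O(n^k)$ candidate sets $A$, the $k$ prominence/relevance entries $q_{A,i}^{(t)}$ for $i \in A$. Charging at most a constant number of relevance-oracle calls to produce each entry gives $k\binom{n}{k} = O(kn^k)$ relevance queries as an upper bound. For the time complexity, the dominant steps are: drawing one Gumbel variable $\eps_A$ per set ($O(n^k)$); computing each score $s_A = \sum_{i \in A} q_{A,i}^{(t)} b_i e^{\eps_A}$ in $O(k)$ time ($O(kn^k)$ in total); the single pass to find $A^* = \argmax_{A} s_A$ ($O(n^k)$); and, for each of the $k$ winners $i \in A^*$, a pass to find $A'(i) = \argmax_{A : i \notin A} s_A$ ($O(kn^k)$ in total), after which each VCG price follows in $O(1)$. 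Summing these and absorbing the constant factor $k$ yields $O(n^k)$.

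This argument is essentially bookkeeping, so there is no substantive obstacle; the only points requiring care are (i) making explicit that the \emph{ex-ante} formulation is what keeps the LLM query complexity at $O(1)$ — if one instead used ex-post relevances $\tilde q_{A,i}^{(t)}$, an output would have to be generated for every $A \in \cA_k$, inflating the LLM query complexity to $O(n^k)$, exactly paralleling the single-allocation ex-post bound — and (ii) fixing the convention that $k$ is constant, under which the $O(kn^k)$ cost of the scoring and payment passes collapses to the stated $O(n^k)$.
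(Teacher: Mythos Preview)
Your proof is correct and follows essentially the same approach as the paper's: a direct accounting of the $\binom{n}{k}\cdot k$ relevance-oracle calls, the $O(n^k)$ work to compute scores and select the winner, and the single LLM call after allocation. Your version is in fact more thorough—explicitly handling the payment step and making the ``$k$ constant'' convention explicit to justify absorbing the $O(kn^k)$ terms—whereas the paper's proof is terser and omits these details.
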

\begin{proof}
    The proof directly follows from the fact that it first needs to compute the set-wise segment for every $A$ and correspondingly the individual relevance for each set, which requires $\binom{n}{k} \times k$ query to the relevance measure.
    Then, computing the perturbed score and selecting the winner requires the computational complexity of $O(n^k)$.
    The LLM generation can be done by a single call once the mechanism finishes.
\end{proof}

One issue of the combinatorial segment auction is that it does not always induce nonnegative payment, unlike the standard VCG payment.
In the VCG payment, the payment is always guaranteed to be nonnegative due to the monotonicity of the social welfare.
To see this more formally, let $W$ be the optimal social welfare with optimal set $A^*$, and let $Vi_i$ be $i$'s utility contributed for $i \in A^*$.
Let $W_{-i}$ be the optimal social welfare when the agent $i$ is excluded from the society.
Then, VCG charges the externality to each $i \in A^*$, \ie $p_i = W_{-i} - \sum_{j \neq i, j \in A^*} V_j$.
Here, $p_i$ is guaranteed to be nonnegative because computing $W_{-i}$ includes the choice of $A^* \setminus \{i\}$.
Thus, $W_{-i}$ is always larger or equivalent to $\sum_{j \neq i, j \in A^*}V_j$.

In the combinatorial segment auction, however, our choice over the $i$-excluded optimal social welfare $W_{-i}$ does not subsume the choice of $A^* \setminus \{i\}$ since we restrict our choice to be within the set with cardinality $k$.
Hence, there might be a chance that $p_i$ is often negative.
Indeed, letting $B = A^* \setminus \{i\}$, if $q_{B, j}$ for $j \in B$ is significantly smaller $q_{A^*,j}$ for $j \in A^*$, then it might happen that the social welfare of $B$ with cardinality $k-1$ is much smaller than that of $A^*$ with $i$ added.
In this case, since adding $i$ to the $B$ increases the participant's utilities, it is reasonable to \emph{compensate} the agent $i$ with negative payment.

One way to prevent such a phenomenon is to simply round up the negative payment to be zero once it occurs.
On the other hand, our perturbation based implementation of the segment auction enforces that the random noise $\eps_i$ is multiplied as an exponent do $e^{\eps_i}$.
This often incur a large gap between the first and the second highest score in the segment auction, inducing a higher chance of resulting negative payment.
To prevent this, one may consider truncate the Gumbel distribution to be drawn from, or controlling the temperature parameter in the discrete choice model~\cite{train2009discrete}, both of which comes at the cost of having a slightly different allocation function.

\section{LLM, prompts, and further experimental details} \label{app:prompt-eng}
We use publicly-available API access of GPT4 as our LLM in our experiments. 
In this Section, we first provide the set of prompts used to generate output in our experiments. Then in~\ref{app:prompt-conf} we discuss our communication protocol with GPT4 API, and in \ref{app:exp-data}, we provide more details on experimental setup, including Naive II mechanism and ads used in our experiments.

We use the following prompts to generate output:\footnote{We remark that our prompt engineering might not be optimal, so better/different prompt engineering might result in better outputs.}

\paragraph{Single-allocation segment auction} We start with $\texttt{init\_query}$ and continue the conversation with the model with $\texttt{rest\_query}$. Note that $\texttt{ad}$ refers to the $\texttt{advertiser}$ context -- a document describing its products or services.

\begin{minted}[mathescape]{python}

prompt = '"Can you suggest some books similar to \"To Kill a Mockingbird\"?"'

init_query = f'''
        {prompt}\n please respond to this question for only one
        sentence while also advertise {advertiser} with this context >>
        {ad}
        Make sure to connect the answer and the advertisement very naturally,
        not something like appending the ads after just answering the question.
        Focus on answering the question,
        there shouldn't be too much advertisment in the output.
        Make sure that response should be one sentence.
        '''

rest_query = f'''
        You must continue your answer to my original query.
        Your previous response was
        >> {previous_output}

        And you now should advertise {advertiser},
        but without hurting the coherency of the entire document.
        Here's some contexts about {advertiser}

        >>  {ad}

        Make sure that there is one new sentence.
        Write the entire document, which merges your previous response and new paragraph.
        '''
\end{minted}

\paragraph{Multi-allocation auction} We mainly did out experiments with three ads, but this template could be further generalized for more/fewer number of ads.

\begin{minted}[mathescape]{python}

prompt = '"Can you suggest some books similar to \"To Kill a Mockingbird\"?"'

query = f'''{prompt}\n please respond to this question for only three sentence while
        (1) advertise {advertisers[0]} with this context >>
        {ads[0]}
        
        (2) advertise {advertisers[1]} with this context >>
        {ads[1]}
        
        (3) advertise {advertisers[2]} with this context >>
        {ads[2]}
        
        Make sure to connect the answer and the advertisement very naturally,
        not something like appending the ads after just answering the question.
        Focus on answering the question,
        there shouldn't be too much advertisment in the output.
        Make sure to advertise all three brands and
        ensure that the response is three sentences.
        '''
\end{minted}

\subsection{Configuration of prompt the LLM}
Here we provide our protocol of communication with GPT4 model. \texttt{messages} refers to the history of chat between the model and us (client).

\label{app:prompt-conf}
\begin{minted}[mathescape]{python}
response = client.chat.completions.create(
        model = "gpt-4-turbo",
        logprobs = False,
        temperature = 1,
        max_tokens = 300,
        messages=messages,)
\end{minted}

\subsection{Further experimental details}\label{app:exp-data}

The following is a detailed pseudocode of Naive II mechanism.
\begin{figure}[H]
\centering
\definecolor{mycolor}{rgb}{0.96,0.96,0.96}
        \begin{tcolorbox}[colback=mycolor,colframe=black]
        {\bf Naive II mechanism}
            \begin{enumerate}
                \item Collect $\q$ and $\b$.
                \item Draw $\eps_i \sim \text{Gumbel}(0,1)$ for each $i \in [n]$ independently.
                \item Compute the score $s_i = b_ie^{\eps_i}$.
                \item Select the winner $i^* = \argmax_{i \in [n]} s_i$.
                \item Find the second highest $i' = \argmax_{i \in [n] \setminus \{i^*\}} s_i$.
                \item Find the smallest bid $z$ for $i^*$ such that $s_{i^*} \ge s_{i'}$, which is $z = b_{i'}e^{\eps_{i'}} /e^{\eps_{i^*}}$.
                \item Charge $z$ to $i^*$ \emph{per click}.
            \end{enumerate}
        \end{tcolorbox}
    \caption{Naive II mechanism}\label{fig:naive-two}
\end{figure}

The following are one-sentence description of the ad listed in Table~\ref{tab:many-adv}, followed by the ad document which is actually used in the prompt to incorporate each ad in the output generation process.
Each ad is a mocked version of a real-world company named by LLM (\ssedit{guess what?}), and all the relevant texts are generated by LLM as well.
\begin{enumerate}
    \item \flink{Velora}: A tech company that designs and sells premium, seamlessly integrated smart devices and services for a sophisticated and efficient lifestyle.
        \begin{tcolorbox}[colback=gray!10, colframe=gray!50, width=\textwidth, sharp corners]
        Discover the future of technology with Velora, the brand that redefines innovation and elegance. Velora designs and sells a premium range of smartphones, tablets, laptops, and smartwatches, all crafted to seamlessly integrate into your lifestyle. Our products are engineered with user-friendly interfaces, stunning designs, and cutting-edge technology to keep you connected and productive. Velora's ecosystem offers unparalleled synchronization across devices, ensuring a smooth and efficient experience whether you're at work, school, or on the go. With Velora Pay, you can enjoy secure and convenient payment services, while our robust cloud service keeps your data safe and accessible anytime, anywhere. Elevate your tech experience with Velora, where sophistication meets simplicity and advanced functionality.
        \end{tcolorbox}
    \item \flink{BookHaven}: An online bookstore offering a vast selection of books across all genres with a seamless shopping experience and reliable delivery.
    \begin{tcolorbox}[colback=gray!10, colframe=gray!50, width=\textwidth, sharp corners]
        Introducing BookHaven, your ultimate online bookstore where the world of literature is just a click away. At BookHaven, we offer an extensive collection of books spanning every genre and interest, from timeless classics and gripping thrillers to insightful non-fiction and enchanting children's stories. Our user-friendly platform ensures a seamless shopping experience, with personalized recommendations and unbeatable prices. Whether you're a voracious reader or just looking for your next great read, BookHaven is dedicated to delivering literary treasures right to your doorstep with fast, reliable shipping and a hassle-free return policy. Discover the joy of reading with BookHaven, where every book finds its perfect reader. Dive into a world of endless possibilities and let your next adventure begin at BookHaven!
    \end{tcolorbox}
    \item \flink{MassMart}: A membership-based retail store offering premium bulk products at unbeatable prices with a focus on customer satisfaction and community support.
    \begin{tcolorbox}[colback=gray!10, colframe=gray!50, width=\textwidth, sharp corners]
    Experience the joy of shopping at MassMart, where quality meets value in a dynamic retail environment tailored for your satisfaction. At MassMart, members enjoy exclusive access to a vast selection of premium, bulk-sized products, from fresh groceries to high-tech electronics, all at unbeatably low prices. With a commitment to customer happiness, sustainability, and community support, MassMart isn't just a shopping destination — it's a part of your community. Dive into a world of savings and discover why millions choose MassMart as their trusted shopping partner. Join us today and see the difference MassMart can make in your shopping experience, where every visit is more than just shopping — it's an adventure!
    \end{tcolorbox}
    
    \item \flink{EspressoEdge}: A premium coffee shop offering high-quality, handcrafted beverages made from the finest Arabica beans, providing a luxurious coffee experience for all.
    \begin{tcolorbox}[colback=gray!10, colframe=gray!50, width=\textwidth, sharp corners]
    Experience the warmth and delight of EspressoEdge, where every sip offers an invitation to a world of exquisite flavors and aromas. Renowned globally for its high-quality, handcrafted beverages, EspressoEdge is committed to sourcing the finest Arabica beans, expertly blending them into a variety of rich espressos, frothy cappuccinos, and creamy lattes. Each visit to an EspressoEdge store is more than just a coffee run—it's an opportunity to savor a moment of luxury amid the hustle of daily life. Whether you seek the comfort of a familiar classic or the thrill of a new seasonal specialty, EspressoEdge welcomes all to gather, connect, and enjoy a cup perfectly tailored to your taste. Step into your local EspressoEdge today and join us in celebrating the art of coffee.
    \end{tcolorbox}
    \item \flink{SocialHub}: A leading social media platform that connects over two billion users through personalized news feeds, interactive groups, and tools for sharing life's moments and promoting businesses.
    \begin{tcolorbox}[colback=gray!10, colframe=gray!50, width=\textwidth, sharp corners]
    Discover the power of connection with SocialHub, the world's leading social media platform. With over two billion active users, SocialHub is your gateway to staying in touch with friends and family, discovering new communities, and sharing your life's moments. Our innovative features, from personalized news feeds to interactive groups, make it easy to engage with what matters most to you. Whether you're promoting your business, staying updated on the latest news, or simply keeping up with loved ones, SocialHub is the ultimate tool to enhance your digital experience. Join us today and be part of a global network where connections come to life!
    \end{tcolorbox}
    
    \item \flink{ColaBubbles}: The world’s favorite soft drink, known for its unique flavor blend and effervescent bubbles that have been delighting people for over a century.
    \begin{tcolorbox}[colback=gray!10, colframe=gray!50, width=\textwidth, sharp corners]
    Experience the refreshing taste of ColaBubbles, the world's favorite soft drink. With its unique blend of flavors and effervescent bubbles, ColaBubbles has been bringing joy to people of all ages for over a century. Whether you're enjoying a moment of relaxation, celebrating with friends, or on the go, ColaBubbles is the perfect companion to quench your thirst and uplift your spirits. Our commitment to quality and tradition ensures every sip is as delightful as the first. Indulge in the classic taste of ColaBubbles and make every moment special. Taste the feeling!
    \end{tcolorbox}
    \item \flink{FizzyPop}: An iconic soft drink celebrated for its crisp, refreshing flavor and vibrant effervescence, perfect for those who live life boldly and seek excitement in every moment.
    \begin{tcolorbox}[colback=gray!10, colframe=gray!50, width=\textwidth, sharp corners]
    Unleash the bold taste of FizzyPop, the iconic soft drink that invigorates and refreshes like no other. Known for its crisp, refreshing flavor and vibrant effervescence, FizzyPop is the perfect choice for those who dare to live life to the fullest. Whether you're at a party, watching a game, or simply taking a break, FizzyPop brings a burst of excitement to any occasion. With a heritage of quality and a commitment to innovation, every sip of FizzyPop delivers an unmatched experience. Embrace the bold, and make every moment extraordinary with the unmistakable taste of FizzyPop.
    \end{tcolorbox}
    \item \flink{SkyTech}: The world’s leading aerospace company, designing and manufacturing advanced commercial airplanes, defense systems, and space technologies to ensure safe and efficient global connectivity and exploration.
    \begin{tcolorbox}[colback=gray!10, colframe=gray!50, width=\textwidth, sharp corners]
    Explore the skies with SkyTech, the world's leading aerospace company renowned for its innovation, quality, and reliability. SkyTech designs, manufactures, and services commercial airplanes, defense systems, and space technologies, making global connectivity and exploration possible. Whether you're traveling for business or leisure, SkyTech's state-of-the-art aircraft ensure a safe, comfortable, and efficient journey. With a legacy of pioneering advancements and a commitment to excellence, SkyTech continues to shape the future of aviation. Choose SkyTech and experience the pinnacle of aerospace engineering and performance. Fly with confidence, fly with SkyTech.
    \end{tcolorbox}
    \item \flink{AeroDynamics}: The global leader in aerospace innovation, designing and manufacturing advanced commercial aircraft that provide unparalleled comfort, efficiency, and reliability for a superior flying experience.
    \begin{tcolorbox}[colback=gray!10, colframe=gray!50, width=\textwidth, sharp corners]
    Experience the future of aviation with AeroDynamics, the global leader in aerospace innovation and excellence. AeroDynamics designs and manufactures the world's most advanced commercial aircraft, providing unparalleled comfort, efficiency, and reliability. From cutting-edge technology to sustainable solutions, AeroDynamics is dedicated to shaping the future of air travel. Whether you're embarking on a long-haul journey or a short domestic flight, AeroDynamics ensures a superior flying experience with spacious cabins, innovative features, and top-notch safety standards. Trust AeroDynamics for a seamless and enjoyable journey every time. Fly smarter, fly with AeroDynamics.
    \end{tcolorbox}
    \item \flink{MusicStream}: The ultimate destination for streaming millions of songs with personalized recommendations and offline listening capabilities, offering a seamless music experience anytime, anywhere.
    \begin{tcolorbox}[colback=gray!10, colframe=gray!50, width=\textwidth, sharp corners]
    Immerse yourself in the world of music with MusicStream, the ultimate destination for streaming your favorite tunes anytime, anywhere. With a vast library of millions of songs, playlists curated just for you, and personalized recommendations, MusicStream puts the power of music discovery in your hands. Whether you're in the mood for chart-topping hits, underground gems, or soothing melodies, MusicStream has something for everyone. Plus, with offline listening capabilities and seamless integration across devices, you can take your music with you wherever you go. Join the millions of music lovers worldwide and unlock endless possibilities with MusicStream. Discover, stream, and experience the joy of music like never before.
    \end{tcolorbox}
    \item \flink{BrainChips}: The global leader in semiconductor technology, providing cutting-edge processors that power a wide range of devices with industry-leading performance, reliability, and security for professionals, gamers, and more.
    \begin{tcolorbox}[colback=gray!10, colframe=gray!50, width=\textwidth, sharp corners]
    Experience the cutting-edge innovation of BrainChips, the global leader in semiconductor technology. BrainChips' groundbreaking processors power the devices that fuel our modern world, from laptops and desktops to servers and cloud computing systems. With a legacy of pushing the boundaries of technology, BrainChips continues to deliver industry-leading performance, reliability, and security. Whether you're a professional tackling complex tasks or a gamer seeking immersive experiences, BrainChips processors provide the power and efficiency you need. Trust BrainChips to deliver the performance you demand and the reliability you can count on. Join the millions who rely on BrainChips technology and unlock new possibilities for productivity, creativity, and entertainment.
    \end{tcolorbox}
\end{enumerate}

\section{Further experimental results}
We here provide further experimental results that could not discussed in the main paper.
For the auction outcomes, Scenario $1$ denotes the setup presented in Section~\ref{sec:exp}.
\subsection{Further results on the auction outcomes with different scenarios}\label{app:exp-fur-auction}

\begin{table}[H]
    \begin{minipage}{.328\linewidth}
      \centering
      \resizebox{\textwidth}{!}{
        \begin{tabular}{cccc}
        \toprule
            Advertiser   & Bid  & $q_i$  & $x_i$  \\ \midrule
            {Velora}       & $2$  & $0.36$ & $0.22$ \\ 
            BookHaven    & $1$  & $0.87$ & $0.26$ \\ 
            MassMart     & $3$  & $0.31$ & $0.28$ \\ 
            EspressoEdge & $3$  & $0.26$ & $0.24$ \\ \bottomrule
        \end{tabular}
        }
    \end{minipage}%
    \begin{minipage}{.68\linewidth}
      \centering
        \resizebox{\textwidth}{!}{
        \begin{tabular}{ccccc}
        \toprule
        Mechanism           & Soc. Wel.         & Revenue                & Relevance              & Min. Soc. Wel. \\ \midrule
        Seg w/ repl.  & $.898$ {\scriptsize ($\pm .0022$)} & $.347$ {\scriptsize ($\pm .0071$)} & $.527$ {\scriptsize ($\pm .0077$)} & $.439$             \\ 
        Seg w/o repl. & $.896$ {\scriptsize ($\pm .0013$)} & $.317$ {\scriptsize ($\pm .0060$)} & $.521$ {\scriptsize ($\pm .0040$)} & $.490$             \\ 
        Naive II      & $.897$ {\scriptsize ($\pm .0023$)} & $.378$ {\scriptsize ($\pm .0069$)} & $.418$ {\scriptsize ($\pm .0053$)} & $.287$             \\
        Multi alloc   & $.892$ ({\scriptsize $\pm .0013$)} & $.255$ {\scriptsize ($\pm .0058$)}	& $.516$ {\scriptsize ($\pm .0042$)}  & $.515$ \\
        \bottomrule
        \end{tabular}
        }
    \end{minipage}
    \vspace{2mm}
    \caption{Setup of Scenario $2$ representing an~almost uniform allocative vector (left), and the corresponding auction outcomes (right).}\label{tab:setup}
\end{table}

\paragraph{Scenario $2$: Almost uniform allocation vector}
In this scenario, the allocation probabilities are almost the same across the four ads.
The bids $\b$, computed relevance $\q$, and the allocation probability $\x$ as well as the auction outcomes are presented in Table~\ref{tab:setup}.
Since all the advertisers induce almost the same allocative social welfare $q_i v_i$, one can verify that the overall social welfare does not significantly differ across the mechanisms in this case.

Since Naive II only accounts for the bids regardless of the relevance, we observe that the revenue is indeed the highest among three auctions.
However, the overall relevance of the Naive II mechanism is much lower than both the segment auctions, which implies that the user experiment might be much worse than the segment auctions.
For the minimum social welfare, Naive II exhibits significantly smaller quantity which is due to the nonuniform bids across the ads in Table~\ref{tab:setup}.

\begin{table}[H]
\centering
\begin{tabular}{ccccc}
\toprule
Mechanism           & Soc. Wel.         & Revenue                & Relevance              & Min. Soc. Wel. \\ \midrule
Seg w/ repl.  & $0.898$ {\scriptsize ($\pm 0.0022$)} & $0.347$ {\scriptsize ($\pm 0.0071$)} & $0.527$ {\scriptsize ($\pm 0.0077$)} & $0.439$             \\ 
Seg w/o repl. & $0.896$ {\scriptsize ($\pm 0.0013$)} & $0.317$ {\scriptsize ($\pm 0.0060$)} & $0.521$ {\scriptsize ($\pm 0.0040$)} & $0.490$             \\ 
Naive II      & $0.897$ {\scriptsize ($\pm 0.0023$)} & $0.378$ {\scriptsize ($\pm 0.0069$)} & $0.418$ {\scriptsize ($\pm 0.0053$)} & $0.287$             \\
Multi alloc   & $0.892$ ({\scriptsize $\pm 0.0013$)} & $0.255$ {\scriptsize ($\pm 0.0058$)}	& $0.516$ {\scriptsize ($\pm 0.0042$)}  & $0.515$ \\
\bottomrule
\end{tabular}
\vspace{2mm}
\caption{Auction outcomes for Scenario $2$.}\label{tab:uniform}
\end{table}

\paragraph{Scenario 3: More number of ads}
Here we consider $11$ different advertisers as follows.
\begin{table}[H]
\resizebox{\textwidth}{!}{
\begin{tabular}{cccccccccccc}
\toprule
Adv & Velora  &\begin{tabular}[c]{@{}c@{}}Book\\ Haven\end{tabular} & \begin{tabular}[c]{@{}c@{}}Mass\\ Mart\end{tabular} & \begin{tabular}[c]{@{}c@{}}Espresso\\ Edge\end{tabular} & \begin{tabular}[c]{@{}c@{}}Social\\ Hub\end{tabular} & \begin{tabular}[c]{@{}c@{}}Cola\\ Bubbles\end{tabular} & \begin{tabular}[c]{@{}c@{}}Fizzy\\ Pop\end{tabular} & \begin{tabular}[c]{@{}c@{}}Sky\\ Tech\end{tabular} & \begin{tabular}[c]{@{}c@{}}Aero\\ Dynamics\end{tabular} & \begin{tabular}[c]{@{}c@{}}Music\\ Stream\end{tabular} & \begin{tabular}[c]{@{}c@{}}Brain\\ Chips\end{tabular} \\ \midrule
Bids       & 1       & 1                                                                            & 1                                                   & 1                                                       & 1                                                    & 1                                                      & 1                                                   & 1                                                  & 1                                                       & 1                                                      & 1                                                     \\ 
$q_i$      & $0.36$  & $0.87$                                                                       & $0.31$                                              & $0.26$                                                  & $0.21$                                               & $0.36$                                                 & $0.38$                                              & $0.28$                                             & $0.33$                                                  & $0.34$                                                 & $0.33$                                                \\ 
$x_i$      & $0.088$ & $0.215$                                                                      & $0.076$                                             & $0.064$                                                 & $0.053$                                              & $0.088$                                                & $0.095$                                             & $0.070$                                            & $0.082$                                                 & $0.084$                                                & $0.082$                                              \\\bottomrule
\end{tabular}}
\vspace{2mm}
\caption{Bids and relevance of the advertisers for Scenario $3$.}\label{tab:many-adv}
\end{table}
Briefly speaking, an ad (`BookHaven') has relatively large allocation probability of $0.21$ than any others, while all the others have similar allocation probability of around $0.08$.
We mainly observe a similar tendency discussed so far in the first two Scenarios.

\begin{table}[H]
\resizebox{\textwidth}{!}{
\begin{tabular}{ccccc}
\toprule
Mechanism           & Soc. Wel.          & Revenue                & Relevance              & Min. Soc. Wel. \\ \midrule
Seg w/ repl.  & $0.507$ ($\pm 0.0068$)  & $0.482$ ($\pm 0.0070$) & $0.507$ ($\pm 0.0068$) & $0.039$             \\ 
Seg w/o repl. & $0.489$ ($\pm 0.0048$)  & $0.481$ ($\pm 0.0074$) & $0.489$ ($\pm 0.0048$) & $0.034$             \\ 
Naive II            & $0.423$ ($\pm 0.0049$) & $0.495$ ($\pm 0.0071$) & $0.423$ ($\pm 0.0049$) & $0.052$             \\
Multi-alloc    & $0.491$ ($\pm 0.0049$)	&  $0.453$ ($\pm 0.0073$)	 & $0.491$ ($\pm 0.0049$) & $0.042$ \\
\bottomrule
\end{tabular}
}
\vspace{2mm}
\caption{Auction outcomes for Scenario $3$.}
\end{table}

For the quality of the output, we again observe a similar tendency, however, the gap between the multi and single allocation segment auction amplifies.

\begin{table}[H]
\centering
\begin{tabular}{cccc}
\toprule
Mechanism     & $k=1$   & $k=2$   & $k=3$   \\ \midrule
Seg w/ repl.  & $0.733$ {\tiny ($\pm .005$)}   & $0.701$ {\footnotesize ($\pm .005$)}     & $0.686$ {\scriptsize ($\pm .004$)}\\ 
Seg w/o repl. & $0.745$ ($\pm .005$)   & $0.711$ ($\pm .004$)     & $0.697$ ($\pm .004$)\\ 
Multi-alloc   & -       & -       & $0.738$ ($\pm .004$) \\ \bottomrule
\end{tabular}
\vspace{2mm}
\caption{Similarity of output to the original output in Scenario 3.}
\end{table}

\subsection{Further qualitative analysis}\label{app:exp}

We here provide some qualitative analyses on the generated outputs from different mechanisms presented in Section~\ref{sec:exp}.
Relevant information about each ad can be found in the subsequent Appendix~\ref{app:exp-data}.
The following are examples of the original response with three sentences without any ad allocated.

\begin{tcolorbox}[colback=gray!10, colframe=gray!50, width=\textwidth, sharp corners]
\textbf{Example 1. }If you enjoyed "To Kill a Mockingbird," consider reading "The Help" by Kathryn Stockett, which explores racial issues in 1960s Mississippi through the eyes of three women. Another excellent choice is "A Tree Grows in Brooklyn" by Betty Smith, a poignant coming-of-age story set in early 20th century Brooklyn. For a powerful narrative on racism and social justice, "The Secret Life of Bees" by Sue Monk Kidd is highly recommended, set in South Carolina during the 1960s civil rights movement.
\end{tcolorbox}

\begin{tcolorbox}[colback=gray!10, colframe=gray!50, width=\textwidth, sharp corners]
\textbf{Example 2. }If you liked "To Kill a Mockingbird," you might enjoy "The Help" by Kathryn Stockett, which delves into racial tensions in 1960s Mississippi. "A Tree Grows in Brooklyn" by Betty Smith offers a touching coming-of-age story in early 20th-century Brooklyn. Another great choice is "The Secret Life of Bees" by Sue Monk Kidd, set in 1960s South Carolina, exploring themes of racism and redemption.
\end{tcolorbox}

\begin{tcolorbox}[colback=gray!10, colframe=gray!50, width=\textwidth, sharp corners]
\textbf{Example 3. }For those who appreciated "To Kill a Mockingbird," "The Help" by Kathryn Stockett is a compelling choice, examining racial issues in 1960s Mississippi. "A Tree Grows in Brooklyn" by Betty Smith is another excellent option, featuring a heartfelt coming-of-age story in early 20th-century Brooklyn. Additionally, "The Secret Life of Bees" by Sue Monk Kidd, set in 1960s South Carolina, explores themes of racism and personal growth.
\end{tcolorbox}

\paragraph{With or without replacement}
In the setting of Scenario $2$, the following is an output of the \emph{segment auction with replacement}.
Note that we explicitly specify the segment number for readability, although it was not directly specified in the LLM output.
\begin{tcolorbox}[colback=gray!10, colframe=gray!50, width=\textwidth, sharp corners]
\textbf{(Segment 1)} At \flink{BookHaven}, where every book finds its perfect reader, you'll discover amazing works similar to "To Kill a Mockingbird," such as Harper Lee's "Go Set a Watchman," Carson McCullers' "The Heart is a Lonely Hunter," and William Faulkner's "Intruder in the Dust," all ready to be explored and delivered \textcolor{red}{right to your doorstep.} \textbf{(Segment 2)} Dive into \flink{BookHaven}, your ultimate online bookstore, to enjoy a user-friendly shopping experience with personalized recommendations and unbeatable prices, ensuring that every literary adventure is just a click away.  \textbf{(Segment 3)} Plus, with our commitment to fast, reliable shipping and a hassle-free return policy, \teal{your next great read is guaranteed} to arrive swiftly at your door.
\end{tcolorbox}

In this random incidence of the segment auction with replacement, the same ad `BookHaven' is repeatedly selected for all the segments.
Notably, in the last segment LLM does not explicitly mention `BookHaven', while keep talking about the benefits of `BookHaven' actually (\teal{teal} texts).
From the user experience perspective, this would be better than keep mentioning an url to an ad for each segment, since the user may think the LLM is trying to advertise more but not to answer the query.
We find this to be a potential advantage of allowing repeated selection of the same ad in the segment auction.

On the other hand, one might notice that the second segment is not coherently continued from the first segment's context (\red{red} texts).
One might require a better prompt engineering, or allowing repetition might not be a good option for \emph{coherency}, since it essentially requires several segments to repeatedly mention the same ad.
This might be even more critical if the same ads are selected in a non-consecutive manner.
One remedy might be to restrict the repeated selection to be allowable only if it is going to be consecutively selected in the neighboring segments.

The following is an output of the \emph{segment auction without replacement}.
\begin{tcolorbox}[colback=gray!10, colframe=gray!50, width=\textwidth, sharp corners]
\textbf{(Segment 1)} If you appreciated the profound narratives and moral questions explored in "To Kill a Mockingbird," you might enjoy "The Help" by Kathryn Stockett or "A Time to Kill" by John Grisham, both of which blend compelling storytelling with social issues, much like \flink{Velora} blends \red{sophistication and functionality} \teal{in their cutting-edge devices.} \textbf{(Segment 2)} \teal{Just as these books provide a backdrop to reflect on societal issues over a compelling story, visiting} an \flink{EspressoEdge} store can be your perfect escape to reflect and unwind with a meticulously crafted coffee, enhancing your experience of luxury and quality in \teal{every sip.} 
\textbf{(Segment 3)} \teal{After indulging in coffee and social reflections}, why not continue exploring similar profound narratives by visiting \flink{BookHaven}, where a vast selection of literature awaits to complement your tastes and spark further thought, all conveniently available with just a click.
\end{tcolorbox}
Interestingly, even though the LLM is forced to advertise different ad in each segment, we find that the resulting output is very coherent, in particular from how it begins the new sentence from the previous sentence's context (\teal{teal} texts).
Generally, however, when it allocates a less relevant ad like `Velora', one may see that it is not very fluently advertised (\red{red} texts).

\paragraph{Multi versus single allocation}

Figure~\ref{fig:qual2} is another pair of outputs for multi and single allocation segment auction.

\begin{figure}[H]
    \begin{tcolorbox}[colback=gray!10, colframe=gray!50, width=\textwidth, sharp corners]
    \textbf{Multi-allocation}:
    
    If you treasured "To Kill a Mockingbird," you might enjoy exploring similar themes of justice and morality in books like "The Help" by Kathryn Stockett or "A Time to Kill" by John Grisham, \teal{both of which you can easily find on} \flink{BookHaven}, where classic stories and modern narratives converge for readers like you. Between readings, \teal{you can dive into discussions with other book lovers or share your favorite quotes} using \flink{SocialHub}, transforming your reading experience into lively conversations and digital connections. Plus, \teal{imagine reading these great books} on \flink{Velora}'s latest tablet, which offers not only an elegant design but also the cutting-edge technology to enhance your reading with features like blue light reduction for your comfort during long reading hours.
    
    \vspace{3mm}
    \textbf{Single allocation}:
    
    (Segment1) If you appreciated the complex social themes and engaging narrative of "To Kill a Mockingbird," you might enjoy "The Secret Life of Bees" by Sue Monk Kidd, which explores similar questions of justice and identity in a setting that, much like \flink{Velora}’s \red{seamlessly integrated technology, deeply connects and resonates with its environment}. (Segment2) To easily \teal{find this book} and many others that stir powerful discussions and thoughts, check out \flink{BookHaven}, the ultimate online bookstore for a seamless shopping experience and personalized recommendations that ensure you find your perfect read, \red{just as Velora ensures} your tech experience is just right for your lifestyle. (Segment3) Furthermore, to discuss the \teal{impactful themes and characters found in both books}, join global conversations on \flink{SocialHub}, where you can connect with other literature enthusiasts and engage in meaningful discussions, bringing your reading experience into a vibrant community where every voice matters.
\end{tcolorbox}
\caption{Outputs of multi and single allocation segment auction.}\label{fig:qual2}
\end{figure}
Again, we observe that the multi-allocation segment auction is superior in its quality from our (subjective) judgement.
All the selected ads are very smoothly expanded throughout the response (\teal{teal} texts).
In the single allocation segment auction, however, the first segment includes `Velora' in an awkward manner (\red{red} texts).
Interestingly, the second segment in the single allocation segment auction once more mentions `Velora' (\red{red} texts), even though only `BookHaven' is selected in that segment.
We believe this is because LLM tries to more naturally generate the second segment given the first segment, which in turn worsens its quality paradoxically.

\ssedit{
\paragraph{Longer segments}
Finally, we conduct an analogous experiment with longer segments (paragraph) using the single ad segment auction withour replacement.
The following is an output of the \emph{segment auction without replacement}.
\begin{tcolorbox}[colback=gray!10, colframe=gray!50, width=\textwidth, sharp corners]
\textbf{(Segment 1)} If you appreciated the thematic depth and moral introspection of "To Kill a Mockingbird," you might enjoy exploring similar narratives like "The Help" by Kathryn Stockett or "A Time to Kill" by John Grisham. \teal{Find these titles easily at BookHaven, your ultimate online bookstore}. With an extensive collection and personalized recommendations, BookHaven ensures a seamless shopping experience. \red{Dive into a world of endless possibilities with BookHaven}, where every book finds its perfect reader. 

\textbf{(Segment 2)} Additionally, \teal{consider delving into classics such as "Cry, the Beloved Country" by Alan Paton}, which, like Harper Lee's masterpiece, offers profound insights into social justice and empathy. \teal{Pair your reading experience with a visit to EspressoEdge}, where each sip of their high-quality, handcrafted beverages provides a moment of luxury. Savor rich espressos or creamy lattes while you immerse yourself in timeless literature at EspressoEdge.

\textbf{(Segment 3)} For a contemporary twist, \teal{you might also enjoy} "Small Great Things" by Jodi Picoult, a novel that tackles race and prejudice in modern society. \teal{Enhance your reading experience with Velora's range of tablets and e-readers}, which offer crisp displays and user-friendly interfaces. Velora's smart devices ensure your favorite books are always accessible, whether you're at home or on the go. \red{Elevate your tech experience with Velora.}
\end{tcolorbox}
Although not directly comparable with the previous result as the entire document itself gets longer, we find that the LLM keeps answering the original user's query in the beginning of each paragraph, and then try advertising the corresponding product more in a smooth manner (\teal{teal} texts), whereas one might argue that there are some redundant texts advertising too much about the product (\red{red} texts), which might backfire for the marketing purpose.
}

\section{Proofs}\label{app:proof}
\subsection{Proof of Theorem~\ref{thm:seg-single-linear}}
\begin{proof}
    We first prove that the allocation vector induced by the segment auction is equivalent to ~\eqref{eq:rag-general}.
    To this end, we use the following well known result from discrete choice model.
    \begin{lemma}[Chapter 3, \cite{train2009discrete}]\label{lm:gumbel}
        For each $i \in [n]$, let $s_i \ge 0$ be the score, and let $\tilde{s}_i$ be the perturbed score with i.i.d. random noise $\eps_i$ drawn from Gumbel($0,1$), \ie $\tilde{s}_i = s_i + \eps_i$.
        Then, the probability that $\tilde{s}_i$ has the largest value among $i \in [n]$ is $s_i/(\sum_{j \in [n]}s_j)$.
    \end{lemma}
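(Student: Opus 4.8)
The plan is to prove this Gumbel max-trick by direct computation using the closed form of the Gumbel distribution; since the noise is continuous, ties occur with probability zero and can be ignored. Recall that the standard $\mathrm{Gumbel}(0,1)$ law has CDF $F(x) = \exp(-e^{-x})$ and density $f(x) = e^{-x}\exp(-e^{-x})$. First I would condition on the realization $\eps_i = t$ of the noise attached to coordinate $i$. Then $\tilde s_i = s_i + t$ is the largest perturbed score precisely when $\eps_j \le s_i - s_j + t$ for every $j \ne i$, and since the $\eps_j$ are mutually independent this conditional probability factors as $\prod_{j\ne i} F(s_i - s_j + t)$.

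Next I would integrate this against the density of $\eps_i$, so that the probability $i$ wins equals $\int_{-\infty}^{\infty} f(t)\prod_{j\ne i}F(s_i - s_j + t)\,dt$. The key algebraic step is to collapse the product of double exponentials: since $\prod_{j\ne i}F(s_i - s_j + t) = \exp\bigl(-e^{-t}\sum_{j\ne i}e^{s_j - s_i}\bigr)$, and the density itself contributes a factor $\exp(-e^{-t})$ which is exactly the missing $j=i$ term (because $e^{s_i - s_i}=1$), the whole integrand becomes $e^{-t}\exp\bigl(-e^{-t}\,S\bigr)$, where $S := e^{-s_i}\sum_{j\in[n]}e^{s_j}$.

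Finally I would evaluate this one-dimensional integral by the substitution $u = e^{-t}$, under which $e^{-t}\,dt = -\,du$ and, after reversing the limits, the integral reduces to the elementary $\int_0^{\infty} e^{-uS}\,du = 1/S$. Substituting back $S = e^{-s_i}\sum_j e^{s_j}$ gives $1/S = e^{s_i}/\sum_{j\in[n]}e^{s_j}$, which is exactly the claimed softmax probability.

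I do not anticipate a genuine obstacle, as this is a classical identity; the only delicate point is the bookkeeping in merging the $n-1$ double-exponential CDF factors with the double exponential coming from the density into a single exponential indexed over all $n$ coordinates, after which the substitution is routine. In applying the lemma to the segment auction one takes $s_i = \ln(q_i b_i)$, so that $e^{s_i} = q_i b_i$ and the winning probability becomes $q_i b_i/\sum_j q_j b_j$, recovering the RAG allocation $\hq_i$ in~\eqref{eq:seg-single-alloc}.
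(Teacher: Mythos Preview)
Your proof is correct and is the standard derivation of the Gumbel max-trick. The paper itself does not prove this lemma at all: it is stated inside the proof of Theorem~\ref{thm:seg-single-linear} purely as a citation to \cite{train2009discrete}, and then applied with $s_i = \log(q_i b_i)$ exactly as you indicate in your last paragraph. So there is nothing to compare against; you have simply supplied the missing argument that the paper delegates to the reference.
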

    Using the lemma above, it is straightforward to see that each ad $i$ is selected with probability $q_ib_i/(\sum_{j \in [n]} q_jb_j)$.

    Now we show that such allocation vector maximizes the logarithmic social welfare, assuming the truthful bids, \ie $\b = \v$.
    Due to the independence, it suffices to show that the logarithmic social welfare for a fixed segment $t$ is maximized by our allocation rule.
    We take an inverse approach of finding the allocation that maximizes the LSW.
    That is, we are interested in $\x$ that maximizes
    \begin{align*}
        \NSW^{(t)} = \prod_{i \in [n]} x_i^{q_i b_i} .
    \end{align*}
    such that $\sum_{i \in [n]} x_i = 1$ and $x_i \ge 0$ for any $i \in [n]$.
    Set up the Lagrange function $L$ as follows.
    \begin{align*}
        L(\x, \lambda) = \prod_{i \in [n]} x_i^{q_ib_i} + \lambda(-1 + \sum_{i \in [n]}x_i ).
    \end{align*}
    By taking the partial derivative with respect to $x_i$ for $i \in [n]$ and $\lambda$, we obtain a system of equations
    \begin{align*}
        \frac{\partial L}{\partial x_i} &= q_ib_i \prod_{i \in [n]} x_i^{q_ib_i} / x_i + \lambda = 0, \forall i \in [n]
        \\
        \frac{\partial L}{\partial \lambda} &= -1 + \sum_{i \in [n]}x_i = 0 .
    \end{align*}
    To solve for $\x$, we first obtain
    \begin{align*}
        q_ib_i \prod_{i \in [n]} x_i^{q_ib_i} / x_i = C,
    \end{align*}
    for some constant $C$ for any $i \in [n]$.
    This yields proportional relationship of
    \begin{align*}
        \frac{q_ib_i}{x_i} = c,
    \end{align*}
    for some constant $c$ for any $i \in [n]$.
    Plugging into $\sum_{i \in [n]} x_i =1$, we get the desired allocation.

    Finally, given any random noise $(\eps_i)_{i \in [n]}$, we will show that the resulting realization of our segment auction is DSIC and IR.
    To prove IR, first consider the winner $i^*$.
    Its per-click utility is given by
    \begin{align*}
         u_{i^*} = v_{i^*} - \frac{q_jb_je^{\eps_j}}{ q_{i^*}e^{\eps_{i^*}}} \ge 0,
    \end{align*}
    for second highest bidder index $j$ since $i^* = \argmax_{i \in [n]} q_i b_j e^{\eps_i}$.
    Since $u_{j} = 0$ for every other $j$, it is clear that IR holds.
    DSIC also naturally follows from the fact that the second price auction is DSIC, and the segment auction can be viewed as a randomization over the second price auction.
    To more explicitly prove this fact given the random noise, for the winner $i^*$, it is obvious that there exists no incentive to deviate due to the second price payment.
    Consider $j \neq i^*$, \ie $q_{i^*}b_{i^*}e^{\eps_{i^*}} > q_jb_je^{\eps_j}$.\footnote{Note that we can ignore the tie-breaking case since Gumbel distribution is continuous, thereby $q_ib_ie^{\eps_i}$ can be deemed as a sample from a continuous distribution for each $i \in [n]$.}
    Originally $j$ realizes per-click utility of $u_j = 0$.
    Suppose $j$ increases its bid so that $b_j'$ satisfies $q_jb_j' e^{\eps_j} \ge q_{i^*} b_{i^*} e^{\eps_{i^*}}$.
    Then, $j$'s per-click utility will be
    \begin{align*}
        (u_j)' \le v_j - \frac{q_{i^*} b_{i^*} e^{\eps_{i^*}}}{q_j e^{\eps_j}} < 0 < u_j.
    \end{align*}
    Thus, $j$'s utility only decreases, and this concludes that the segment auction is DSIC for any random noise.
    Note that the same argument holds for per-impression utility.
    Finally, the Pareto-efficiency directly follows from the fact that the allocation is maximal.
\end{proof}

\subsection{Proof of Theorem~\ref{thm:seg-single-ex-ante}}
\begin{proof}
    To characterize the expected per-click payment formula, we use the lemma by~\cite{myerson1981optimal}.
    \begin{align*}
        p_i(\q, \b) = \int_{0}^{b_i} z \cdot \frac{d}{dz}\left(\frac{q_i z}{\sum_{j \in [n]} q_j b_j}\right) dz \, .
    \end{align*}
    Write $w_{-i} = \sum_{j \in [n] \setminus \{i\}} q_j b_j$.
    Computing the integral above, we obtain
    \begin{align}
        p_i(\q, \b)
        &= \int_{0}^{b_i} z \cdot \frac{d}{dz}\left(\frac{q_iz}{w_{-i} + q_iz}\right) dz \nonumber
        \\
        &= \int_{0}^{b_i} z \cdot \left(\frac{w_{-i}/q_i}{(w_{-i}/q_i + z)^2}\right) dz \nonumber
        \\
        &= \int_{0}^{b_i} \frac{w_{-i}z/q_i}{(w_{-i}/q_i +  z)^2} dz \nonumber
        \\
        &= \frac{w_{-i}}{q_i}\left(\frac{w_{-i}/q_i}{z + w_{-i}/q_i} + \ln {(z + w_{-i}/q_i)}\right) \Big|_0^{b_i}\nonumber
        \\
        &= 
        \frac{w_{-i}}{q_i} \parans{\frac{w_{-i}}{w_{-i} + q_ib_i} + \ln (b_i + w_{-i}/q_i) - 1 - \ln(w_{-i}/q_i)}
        \\
        &=
        \frac{w_{-i}}{q_i}\parans{\ln(\frac{q_ib_i + w_{-i}}{w_{-i}}) - \frac{q_ib_i}{w_{-i} + q_ib_i}}
        \\
        &\ge 0
    \end{align}
    where the last inequality follows from the fact $\ln(x) + 1/x \ge 0$.\footnote{\ssedit{We skip the elementary level algebraic manipulation}.}
    Individual rationality in expectation follows from the fact that each realized instance of segment auction is individual rational, however, to see this explicitly, observe that for each $i \in [n]$, per-click utility satisfies
    \begin{align*}
        u_i(\q, \b) &= b_i \frac{q_ib_i}{w_{-i} + q_i b_i}  - \frac{w_{-i}}{q_i}\parans{\ln(\frac{q_ib_i + w_{-i}}{w_{-i}}) - \frac{q_ib_i}{w_{-i} + q_ib_i}}
        \\
        &\ge
        b_i \frac{q_ib_i}{w_{-i} + q_i b_i} - b_i + \frac{w_{-i} b_i}{w_{-i} + q_ib_i}
        \\
        &=
        b_i \cdot 1 - b_i = 0
    \end{align*}
    where in the first inequality we use $-\ln(1 + t) \ge -t$ for $t > -1$.
\end{proof}

\subsection{Proof of Theorem~\ref{thm:seg-multi-linear}}
\begin{proof}
    We first prove that our allocation maximizes the CLSW.
    A similar Lagrangian-based argument implies that our allocation function maximizes the following over $x = (x_A)_{A \in \cA_k}$ that belongs to $\binom{n}{k}$ dimensional probability simplex.
    \begin{align}
        (\hq_A^{(t)})_{A \in \cA_k} 
        &= \argmax_{x \in \Delta_{\binom{n}{k}}} \sum_{A \in \cA_k} \hq_A^{(t)} \log x_A
        \\
        &= \argmax_{x \in \Delta_{\binom{n}{k}}}  \sum_{A \in \cA_k} \left( \sum_{i \in A} q_{A,i}^{(t)} v_i\right) \log x_A
        \\ 
        &= \argmax_{x \in \Delta_{\binom{n}{k}}} \sum_{i \in [n]} v_i \left( \sum_{A \in \cA_k : i \in A} q_{A,i}^{(t)} \log x_A\right).
    \end{align}
    
    Equivalently, this can be written as
    \begin{align}
        \prod_{i \in [n]} \left( \prod_{A \in \cA_k: i \in A} x_A^{q_{A,i}^{(t)}}\right)^{v_i},
    \end{align}
    which implies that our allocation maximizes CLSW.
    DSIC, IR, and Pareto efficiency follow from the similar argument with Theorem~\ref{thm:seg-single-linear}, \ie from the fact that our payment function is VCG payment of charging the externality, which is equivalent to the Myerson payment in the single-dimensional setting.
\end{proof}

\subsection{Proof of Theorem~\ref{thm:seg-single-ex-ante}}
\begin{proof}
    To characterize the expected per-click payment formula, we use the lemma by~\cite{myerson1981optimal}.
    \begin{align*}
        p_i(\q, \b) = \int_{0}^{b_i} z \cdot \frac{d}{dz}\left(\frac{q_i z}{\sum_{j \in [n]} q_j b_j}\right) dz \, .
    \end{align*}
    Write $w_{-i} = \sum_{j \in [n] \setminus \{i\}} q_j b_j$.
    Computing the integral above, we obtain
    \begin{align}
        p_i(\q, \b)
        &= \int_{0}^{b_i} z \cdot \frac{d}{dz}\left(\frac{q_iz}{w_{-i} + q_iz}\right) dz \nonumber
        \\
        &= \int_{0}^{b_i} z \cdot \left(\frac{w_{-i}/q_i}{(w_{-i}/q_i + z)^2}\right) dz \nonumber
        \\
        &= \int_{0}^{b_i} \frac{w_{-i}z/q_i}{(w_{-i}/q_i +  z)^2} dz \nonumber
        \\
        &= \frac{w_{-i}}{q_i}\left(\frac{w_{-i}/q_i}{z + w_{-i}/q_i} + \ln {(z + w_{-i}/q_i)}\right) \Big|_0^{b_i}\nonumber
        \\
        &= 
        \frac{w_{-i}}{q_i} \parans{\frac{w_{-i}}{w_{-i} + q_ib_i} + \ln (b_i + w_{-i}/q_i) - 1 - \ln(w_{-i}/q_i)}
        \\
        &=
        \frac{w_{-i}}{q_i}\parans{\ln(\frac{q_ib_i + w_{-i}}{w_{-i}}) - \frac{q_ib_i}{w_{-i} + q_ib_i}}
        \\
        &\ge 0
    \end{align}
    where the last inequality follows from the fact $\ln(x) + 1/x \ge 0$.\footnote{\ssedit{We skip the elementary level algebraic manipulation}.}
    Individual rationality in expectation follows from the fact that each realized instance of segment auction is individual rational, however, to see this explicitly, observe that for each $i \in [n]$, per-click utility satisfies
    \begin{align*}
        u_i(\q, \b) &= b_i \frac{q_ib_i}{w_{-i} + q_i b_i}  - \frac{w_{-i}}{q_i}\parans{\ln(\frac{q_ib_i + w_{-i}}{w_{-i}}) - \frac{q_ib_i}{w_{-i} + q_ib_i}}
        \\
        &\ge
        b_i \frac{q_ib_i}{w_{-i} + q_i b_i} - b_i + \frac{w_{-i} b_i}{w_{-i} + q_ib_i}
        \\
        &=
        b_i \cdot 1 - b_i = 0
    \end{align*}
    where in the first inequality we use $-\ln(1 + t) \ge -t$ for $t > -1$.
\end{proof}
\subsection{Proof of Theorem~\ref{thm:multi-seg}}
For ease of exposition, we slightly redefine some notations.
Let $N$ be the set of agents and $n$ be the number of agents. 
We overwrite $b_i = \log(q_i b_i)$ and therefore perturbation can be written as $b_i + \eps_i$ instead of $q_ib_ie^{\eps_i} = e^{\log(q_ib_i)}\cdot e^{\eps_i}$.
That is, each agent $i$ places bid $b_i$, which is perturbed by $\eps_i \sim \mathrm{Gumbel}(0, 1)$ to yield the final perturbed bid $B_i = b_i + \eps_i$. 
We take the top $k$ ads according to perturbed bids, where $1 \leq k \leq n$. Recall that the pdf and cdf of a $\mathrm{Gumbel}(0, 1)$ distribution are:
\begin{eqnarray*}
f(\eps) & = & e^{-\eps} \cdot e^{-e^{-\eps}} \\
F(\eps) & = & e^{-e^{-\eps}}
\end{eqnarray*}
We want to evaluate the probability that the set of agents $S \subseteq N$ wins, where $|S| = k$. First we evaluate the cdf and pdf for the minimum perturbed bid among $S$.
\begin{eqnarray*}
\pr(\min_{i \in S} B_i \geq u) & = &  \prod_{i \in S} (1 - F(u - b_i)) \\
& = & \sum_{T \subseteq S} (-1)^{|T|} \prod_{j \in T} F(u - b_j) \\
& = & \sum_{T \subseteq S} (-1)^{|T|} \exp\left( - \sum_{j \in T} e^{b_j - u}\right) \\
\pr(\min_{i \in S} B_i \leq u) & = & 1 - \sum_{T \subseteq S} (-1)^{|T|} \exp\left( - \sum_{j \in T} e^{b_j - u}\right)
\end{eqnarray*}
Taking the derivative:
\begin{eqnarray*}
\pr(\min_{i \in S} B_i = u) & = & \sum_{T \subseteq S} (-1)^{|T|+1} \exp\left( - \sum_{j \in T} e^{b_j - u}\right) \cdot \left( \sum_{j \in T} e^{b_j - u}\right)
\end{eqnarray*}
The cdf for the maximum perturbed bid among $\Sc$ is as follows.
\begin{eqnarray*}
\pr(\max_{i \in \Sc} B_i \leq u) & = &  \prod_{i \in \Sc} F(u - b_i) \\
 & = & \exp\left( - \sum_{i \in \Sc} e^{b_i - u}\right) \\
\end{eqnarray*}
Continuing, the probability that $S \subseteq N$ wins is as follows.
\begin{eqnarray*}
\pr(\mbox{$S$ wins}) & = & \int_{-\infty}^{\infty} \pr(\max_{i \in \Sc} B_i \leq u) \cdot \pr(\min_{i \in S} B_i = u) \, du \\
& = & \int_{-\infty}^{\infty} \exp\left( - \sum_{i \in \Sc} e^{b_i - u}\right) \cdot \sum_{T \subseteq S} (-1)^{|T|+1} \exp\left( - \sum_{j \in T} e^{b_j - u}\right) \cdot \left( \sum_{j \in T} e^{b_j - u}\right) \, du \\
& = &  \sum_{T \subseteq S} (-1)^{|T|+1} \int_{-\infty}^{\infty} \exp\left( - \sum_{i \in \Sc \cup T} e^{b_i - u}\right) \cdot \left( \sum_{j \in T} e^{b_j - u}\right) \, du \\
& = &  \sum_{T \subseteq S} (-1)^{|T|+1} \left( \sum_{j \in T} e^{b_j}\right) \int_{-\infty}^{\infty} \exp\left( - e^{-u} \sum_{i \in \Sc \cup T} e^{b_i}\right) \cdot e^{- u} \, du
\end{eqnarray*} \\
We now do the change of variable $t = e^{-u}$, $dt = -e^{-u} \, du$. As $u \rightarrow -\infty$, $t \rightarrow \infty$, and as $u \rightarrow \infty$, $t \rightarrow 0$. Continuing:
\begin{eqnarray*}
\pr(\mbox{$S$ wins}) & = & \sum_{T \subseteq S} (-1)^{|T|+1} \left( \sum_{j \in T} e^{b_j}\right) \int_{\infty}^{0} \exp\left( - t \sum_{i \in \Sc \cup T} e^{b_i}\right) \, (-dt) \\
& = & \sum_{T \subseteq S} (-1)^{|T|+1} \left( \sum_{j \in T} e^{b_j}\right) \int_{0}^{\infty} \exp\left( - t \sum_{i \in \Sc \cup T} e^{b_i}\right) \, dt \\
& = & \sum_{T \subseteq S} (-1)^{|T|+1} \left( \sum_{j \in T} e^{b_j}\right) 
\left( \left. \frac{\exp \left( - t \sum_{i \in \Sc \cup T} e^{b_i}\right)}{- \sum_{i \in \Sc \cup T} e^{b_i}} \right|_{t=0}^{\infty} \right) \\
& = & \sum_{T \subseteq S} (-1)^{|T|+1} \left( \sum_{j \in T} e^{b_j}\right) 
\left( 0 - \frac{1}{- \sum_{i \in \Sc \cup T} e^{b_i}} \right) \\
& = & \sum_{T \subseteq S} (-1)^{|T|+1}
\frac{\sum_{j \in T} e^{b_j}}{\sum_{i \in \Sc \cup T} e^{b_i}} \\
\end{eqnarray*}
To summarize, the probability that a set $S \subseteq N$ of size $1 \leq k \leq n$ wins is:
\begin{equation*}
\pr(\mbox{$S$ wins}) = \sum_{T \subseteq S} (-1)^{|T|+1}
\frac{\sum_{j \in T} e^{b_j}}{\sum_{i \in \Sc \cup T} e^{b_i}}.
\end{equation*}
Plugging back $b_i = \log(q_ib_i)$ yields the desired allocation probability.

\newpage
\section*{NeurIPS Paper Checklist}

\begin{enumerate}

\item {\bf Claims}
    \item[] Question: Do the main claims made in the abstract and introduction accurately reflect the paper's contributions and scope?
    \item[] Answer: \answerYes{} 
    \item[] Justification: Abstract and the introduction only includes proven theorems and validated experiments.
    \item[] Guidelines:
    \begin{itemize}
        \item The answer NA means that the abstract and introduction do not include the claims made in the paper.
        \item The abstract and/or introduction should clearly state the claims made, including the contributions made in the paper and important assumptions and limitations. A No or NA answer to this question will not be perceived well by the reviewers. 
        \item The claims made should match theoretical and experimental results, and reflect how much the results can be expected to generalize to other settings. 
        \item It is fine to include aspirational goals as motivation as long as it is clear that these goals are not attained by the paper. 
    \end{itemize}

\item {\bf Limitations}
    \item[] Question: Does the paper discuss the limitations of the work performed by the authors?
    \item[] Answer: \answerYes{} 
    \item[] Justification: Model, main result, experiments, and appenix discussions on the limitations.
    \item[] Guidelines:
    \begin{itemize}
        \item The answer NA means that the paper has no limitation while the answer No means that the paper has limitations, but those are not discussed in the paper. 
        \item The authors are encouraged to create a separate "Limitations" section in their paper.
        \item The paper should point out any strong assumptions and how robust the results are to violations of these assumptions (e.g., independence assumptions, noiseless settings, model well-specification, asymptotic approximations only holding locally). The authors should reflect on how these assumptions might be violated in practice and what the implications would be.
        \item The authors should reflect on the scope of the claims made, e.g., if the approach was only tested on a few datasets or with a few runs. In general, empirical results often depend on implicit assumptions, which should be articulated.
        \item The authors should reflect on the factors that influence the performance of the approach. For example, a facial recognition algorithm may perform poorly when image resolution is low or images are taken in low lighting. Or a speech-to-text system might not be used reliably to provide closed captions for online lectures because it fails to handle technical jargon.
        \item The authors should discuss the computational efficiency of the proposed algorithms and how they scale with dataset size.
        \item If applicable, the authors should discuss possible limitations of their approach to address problems of privacy and fairness.
        \item While the authors might fear that complete honesty about limitations might be used by reviewers as grounds for rejection, a worse outcome might be that reviewers discover limitations that aren't acknowledged in the paper. The authors should use their best judgment and recognize that individual actions in favor of transparency play an important role in developing norms that preserve the integrity of the community. Reviewers will be specifically instructed to not penalize honesty concerning limitations.
    \end{itemize}

\item {\bf Theory Assumptions and Proofs}
    \item[] Question: For each theoretical result, does the paper provide the full set of assumptions and a complete (and correct) proof?
    \item[] Answer: \answerYes{} 
    \item[] Justification: Section 3 and appendix includes it.
    \item[] Guidelines:
    \begin{itemize}
        \item The answer NA means that the paper does not include theoretical results. 
        \item All the theorems, formulas, and proofs in the paper should be numbered and cross-referenced.
        \item All assumptions should be clearly stated or referenced in the statement of any theorems.
        \item The proofs can either appear in the main paper or the supplemental material, but if they appear in the supplemental material, the authors are encouraged to provide a short proof sketch to provide intuition. 
        \item Inversely, any informal proof provided in the core of the paper should be complemented by formal proofs provided in appendix or supplemental material.
        \item Theorems and Lemmas that the proof relies upon should be properly referenced. 
    \end{itemize}

    \item {\bf Experimental Result Reproducibility}
    \item[] Question: Does the paper fully disclose all the information needed to reproduce the main experimental results of the paper to the extent that it affects the main claims and/or conclusions of the paper (regardless of whether the code and data are provided or not)?
    \item[] Answer: \answerYes{} 
    \item[] Justification: Experimental details are written and code will be also attached.
    \item[] Guidelines:
    \begin{itemize}
        \item The answer NA means that the paper does not include experiments.
        \item If the paper includes experiments, a No answer to this question will not be perceived well by the reviewers: Making the paper reproducible is important, regardless of whether the code and data are provided or not.
        \item If the contribution is a dataset and/or model, the authors should describe the steps taken to make their results reproducible or verifiable. 
        \item Depending on the contribution, reproducibility can be accomplished in various ways. For example, if the contribution is a novel architecture, describing the architecture fully might suffice, or if the contribution is a specific model and empirical evaluation, it may be necessary to either make it possible for others to replicate the model with the same dataset, or provide access to the model. In general. releasing code and data is often one good way to accomplish this, but reproducibility can also be provided via detailed instructions for how to replicate the results, access to a hosted model (e.g., in the case of a large language model), releasing of a model checkpoint, or other means that are appropriate to the research performed.
        \item While NeurIPS does not require releasing code, the conference does require all submissions to provide some reasonable avenue for reproducibility, which may depend on the nature of the contribution. For example
        \begin{enumerate}
            \item If the contribution is primarily a new algorithm, the paper should make it clear how to reproduce that algorithm.
            \item If the contribution is primarily a new model architecture, the paper should describe the architecture clearly and fully.
            \item If the contribution is a new model (e.g., a large language model), then there should either be a way to access this model for reproducing the results or a way to reproduce the model (e.g., with an open-source dataset or instructions for how to construct the dataset).
            \item We recognize that reproducibility may be tricky in some cases, in which case authors are welcome to describe the particular way they provide for reproducibility. In the case of closed-source models, it may be that access to the model is limited in some way (e.g., to registered users), but it should be possible for other researchers to have some path to reproducing or verifying the results.
        \end{enumerate}
    \end{itemize}

\item {\bf Open access to data and code}
    \item[] Question: Does the paper provide open access to the data and code, with sufficient instructions to faithfully reproduce the main experimental results, as described in supplemental material?
    \item[] Answer: \answerYes{} 
    \item[] Justification: Github repo will be available.
    \item[] Guidelines:
    \begin{itemize}
        \item The answer NA means that paper does not include experiments requiring code.
        \item Please see the NeurIPS code and data submission guidelines (\url{https://nips.cc/public/guides/CodeSubmissionPolicy}) for more details.
        \item While we encourage the release of code and data, we understand that this might not be possible, so “No” is an acceptable answer. Papers cannot be rejected simply for not including code, unless this is central to the contribution (e.g., for a new open-source benchmark).
        \item The instructions should contain the exact command and environment needed to run to reproduce the results. See the NeurIPS code and data submission guidelines (\url{https://nips.cc/public/guides/CodeSubmissionPolicy}) for more details.
        \item The authors should provide instructions on data access and preparation, including how to access the raw data, preprocessed data, intermediate data, and generated data, etc.
        \item The authors should provide scripts to reproduce all experimental results for the new proposed method and baselines. If only a subset of experiments are reproducible, they should state which ones are omitted from the script and why.
        \item At submission time, to preserve anonymity, the authors should release anonymized versions (if applicable).
        \item Providing as much information as possible in supplemental material (appended to the paper) is recommended, but including URLs to data and code is permitted.
    \end{itemize}

\item {\bf Experimental Setting/Details}
    \item[] Question: Does the paper specify all the training and test details (e.g., data splits, hyperparameters, how they were chosen, type of optimizer, etc.) necessary to understand the results?
    \item[] Answer: \answerYes{} 
    \item[] Justification: Appendix presents all the details.
    \item[] Guidelines:
    \begin{itemize}
        \item The answer NA means that the paper does not include experiments.
        \item The experimental setting should be presented in the core of the paper to a level of detail that is necessary to appreciate the results and make sense of them.
        \item The full details can be provided either with the code, in appendix, or as supplemental material.
    \end{itemize}

\item {\bf Experiment Statistical Significance}
    \item[] Question: Does the paper report error bars suitably and correctly defined or other appropriate information about the statistical significance of the experiments?
    \item[] Answer: \answerYes{} 
    \item[] Justification: We reported the standard deviation.
    \item[] Guidelines:
    \begin{itemize}
        \item The answer NA means that the paper does not include experiments.
        \item The authors should answer "Yes" if the results are accompanied by error bars, confidence intervals, or statistical significance tests, at least for the experiments that support the main claims of the paper.
        \item The factors of variability that the error bars are capturing should be clearly stated (for example, train/test split, initialization, random drawing of some parameter, or overall run with given experimental conditions).
        \item The method for calculating the error bars should be explained (closed form formula, call to a library function, bootstrap, etc.)
        \item The assumptions made should be given (e.g., Normally distributed errors).
        \item It should be clear whether the error bar is the standard deviation or the standard error of the mean.
        \item It is OK to report 1-sigma error bars, but one should state it. The authors should preferably report a 2-sigma error bar than state that they have a 96\% CI, if the hypothesis of Normality of errors is not verified.
        \item For asymmetric distributions, the authors should be careful not to show in tables or figures symmetric error bars that would yield results that are out of range (e.g. negative error rates).
        \item If error bars are reported in tables or plots, The authors should explain in the text how they were calculated and reference the corresponding figures or tables in the text.
    \end{itemize}

\item {\bf Experiments Compute Resources}
    \item[] Question: For each experiment, does the paper provide sufficient information on the computer resources (type of compute workers, memory, time of execution) needed to reproduce the experiments?
    \item[] Answer: \answerYes{} 
    \item[] Justification: All our experiments run with standard computation resource, without GPU.
    \item[] Guidelines:
    \begin{itemize}
        \item The answer NA means that the paper does not include experiments.
        \item The paper should indicate the type of compute workers CPU or GPU, internal cluster, or cloud provider, including relevant memory and storage.
        \item The paper should provide the amount of compute required for each of the individual experimental runs as well as estimate the total compute. 
        \item The paper should disclose whether the full research project required more compute than the experiments reported in the paper (e.g., preliminary or failed experiments that didn't make it into the paper). 
    \end{itemize}
    
\item {\bf Code Of Ethics}
    \item[] Question: Does the research conducted in the paper conform, in every respect, with the NeurIPS Code of Ethics \url{https://neurips.cc/public/EthicsGuidelines}?
    \item[] Answer: \answerYes{} 
    \item[] Justification: 
    \item[] Guidelines:
    \begin{itemize}
        \item The answer NA means that the authors have not reviewed the NeurIPS Code of Ethics.
        \item If the authors answer No, they should explain the special circumstances that require a deviation from the Code of Ethics.
        \item The authors should make sure to preserve anonymity (e.g., if there is a special consideration due to laws or regulations in their jurisdiction).
    \end{itemize}

\item {\bf Broader Impacts}
    \item[] Question: Does the paper discuss both potential positive societal impacts and negative societal impacts of the work performed?
    \item[] Answer: \answerYes{} 
    \item[] Justification: This is briefly discussed in the introduction.
    \item[] Guidelines:
    \begin{itemize}
        \item The answer NA means that there is no societal impact of the work performed.
        \item If the authors answer NA or No, they should explain why their work has no societal impact or why the paper does not address societal impact.
        \item Examples of negative societal impacts include potential malicious or unintended uses (e.g., disinformation, generating fake profiles, surveillance), fairness considerations (e.g., deployment of technologies that could make decisions that unfairly impact specific groups), privacy considerations, and security considerations.
        \item The conference expects that many papers will be foundational research and not tied to particular applications, let alone deployments. However, if there is a direct path to any negative applications, the authors should point it out. For example, it is legitimate to point out that an improvement in the quality of generative models could be used to generate deepfakes for disinformation. On the other hand, it is not needed to point out that a generic algorithm for optimizing neural networks could enable people to train models that generate Deepfakes faster.
        \item The authors should consider possible harms that could arise when the technology is being used as intended and functioning correctly, harms that could arise when the technology is being used as intended but gives incorrect results, and harms following from (intentional or unintentional) misuse of the technology.
        \item If there are negative societal impacts, the authors could also discuss possible mitigation strategies (e.g., gated release of models, providing defenses in addition to attacks, mechanisms for monitoring misuse, mechanisms to monitor how a system learns from feedback over time, improving the efficiency and accessibility of ML).
    \end{itemize}
    
\item {\bf Safeguards}
    \item[] Question: Does the paper describe safeguards that have been put in place for responsible release of data or models that have a high risk for misuse (e.g., pretrained language models, image generators, or scraped datasets)?
    \item[] Answer: \answerNA{} 
    \item[] Justification: It does not involve such risk.
    \item[] Guidelines:
    \begin{itemize}
        \item The answer NA means that the paper poses no such risks.
        \item Released models that have a high risk for misuse or dual-use should be released with necessary safeguards to allow for controlled use of the model, for example by requiring that users adhere to usage guidelines or restrictions to access the model or implementing safety filters. 
        \item Datasets that have been scraped from the Internet could pose safety risks. The authors should describe how they avoided releasing unsafe images.
        \item We recognize that providing effective safeguards is challenging, and many papers do not require this, but we encourage authors to take this into account and make a best faith effort.
    \end{itemize}

\item {\bf Licenses for existing assets}
    \item[] Question: Are the creators or original owners of assets (e.g., code, data, models), used in the paper, properly credited and are the license and terms of use explicitly mentioned and properly respected?
    \item[] Answer: \answerYes{} 
    \item[] Justification: All the references are made.
    \item[] Guidelines:
    \begin{itemize}
        \item The answer NA means that the paper does not use existing assets.
        \item The authors should cite the original paper that produced the code package or dataset.
        \item The authors should state which version of the asset is used and, if possible, include a URL.
        \item The name of the license (e.g., CC-BY 4.0) should be included for each asset.
        \item For scraped data from a particular source (e.g., website), the copyright and terms of service of that source should be provided.
        \item If assets are released, the license, copyright information, and terms of use in the package should be provided. For popular datasets, \url{paperswithcode.com/datasets} has curated licenses for some datasets. Their licensing guide can help determine the license of a dataset.
        \item For existing datasets that are re-packaged, both the original license and the license of the derived asset (if it has changed) should be provided.
        \item If this information is not available online, the authors are encouraged to reach out to the asset's creators.
    \end{itemize}

\item {\bf New Assets}
    \item[] Question: Are new assets introduced in the paper well documented and is the documentation provided alongside the assets?
    \item[] Answer: \answerNA{} 
    \item[] Justification: We do not release new assets.
    \item[] Guidelines:
    \begin{itemize}
        \item The answer NA means that the paper does not release new assets.
        \item Researchers should communicate the details of the dataset/code/model as part of their submissions via structured templates. This includes details about training, license, limitations, etc. 
        \item The paper should discuss whether and how consent was obtained from people whose asset is used.
        \item At submission time, remember to anonymize your assets (if applicable). You can either create an anonymized URL or include an anonymized zip file.
    \end{itemize}

\item {\bf Crowdsourcing and Research with Human Subjects}
    \item[] Question: For crowdsourcing experiments and research with human subjects, does the paper include the full text of instructions given to participants and screenshots, if applicable, as well as details about compensation (if any)? 
    \item[] Answer: \answerNA{} 
    \item[] Justification: It does not involve crowdsourcing experiment.
    \item[] Guidelines:
    \begin{itemize}
        \item The answer NA means that the paper does not involve crowdsourcing nor research with human subjects.
        \item Including this information in the supplemental material is fine, but if the main contribution of the paper involves human subjects, then as much detail as possible should be included in the main paper. 
        \item According to the NeurIPS Code of Ethics, workers involved in data collection, curation, or other labor should be paid at least the minimum wage in the country of the data collector. 
    \end{itemize}

\item {\bf Institutional Review Board (IRB) Approvals or Equivalent for Research with Human Subjects}
    \item[] Question: Does the paper describe potential risks incurred by study participants, whether such risks were disclosed to the subjects, and whether Institutional Review Board (IRB) approvals (or an equivalent approval/review based on the requirements of your country or institution) were obtained?
    \item[] Answer: \answerNA{} 
    \item[] Justification: It does not involve crowdsourcing experiment.
    \item[] Guidelines:
    \begin{itemize}
        \item The answer NA means that the paper does not involve crowdsourcing nor research with human subjects.
        \item Depending on the country in which research is conducted, IRB approval (or equivalent) may be required for any human subjects research. If you obtained IRB approval, you should clearly state this in the paper. 
        \item We recognize that the procedures for this may vary significantly between institutions and locations, and we expect authors to adhere to the NeurIPS Code of Ethics and the guidelines for their institution. 
        \item For initial submissions, do not include any information that would break anonymity (if applicable), such as the institution conducting the review.
    \end{itemize}

\end{enumerate}

\end{document}